\documentclass[a4paper,twocolumn,11pt,accepted=2023-01-01]{quantumarticle}
\pdfoutput=1

\PassOptionsToPackage{compress}{natbib}

\usepackage[numbers,sort&compress]{natbib}
\bibliographystyle{unsrtnat}

\usepackage{ifpdf}
\usepackage{amsmath} 
\usepackage{amssymb}
\usepackage{amsfonts}
\usepackage{braket}
\usepackage{color}
\usepackage{hyperref}% add hypertext capabilities
\usepackage{graphicx}% Include figure files
\usepackage{dcolumn}% Align table columns on decimal point
\usepackage{bm}% bold math
\usepackage[normalem]{ulem}
\usepackage[utf8]{inputenc}
\usepackage[english]{babel}
\usepackage[T1]{fontenc}

\usepackage{stackengine}

\newcommand{\beq}{\begin{eqnarray}}
\newcommand{\eeq}{\end{eqnarray}}

\newcommand{\la}{\langle}
\newcommand{\ra}{\rangle}
\newcommand{\tr}{\text{Tr}}

\newcommand{\bfsigma}{\boldsymbol{\sigma}}
\newcommand{\MIE}{\text{MIE}}
\newcommand{\MIC}{\text{MIC}}
\newcommand{\MI}{\text{MI}}

% Author comments
\definecolor{Zcolour}{rgb}{0.992, 0.588, 0.22}
\definecolor{purple}{rgb}{0.5,0,0.5}
\definecolor{brown}{rgb}{0.6,0.2,0}
\definecolor{dkgreen}{rgb}{0,0.5,0}
\newcommand{\authorcomment}[3]{{ \color{#1} \footnotesize (\textsf{#2}) \textsf{\textsl{#3}} }}

\newcommand{\jacob}[1]{\authorcomment{blue}{JL}{#1}}

\usepackage{mathtools}

%%%%%%% Weicheng Ye Addition
\newcommand{\mc}[1]{\mathcal{#1}}

\usepackage[utf8]{inputenc}
\usepackage[english]{babel}
\usepackage{amsthm}
\usepackage{dutchcal}

\newtheorem{theorem}{Theorem}[section]

\newtheorem{lemma}[theorem]{Lemma}
\newtheorem{definition}{Definition}[section]

\usepackage{stackengine}

\usepackage{comment}
\usepackage[caption=false]{subfig}

\begin{document}

%\preprint{APS/123-QED}

\title{Probing sign structure using measurement-induced entanglement}% Force line 
\author{Cheng-Ju Lin}
\thanks{These authors contributed equally.}
\email{jacob0425@gmail.com}
\affiliation{Perimeter Institute for Theoretical Physics, Waterloo, Ontario N2L 2Y5,  Canada}
\author{Weicheng Ye}
\thanks{These authors contributed equally.}
\affiliation{Perimeter Institute for Theoretical Physics, Waterloo, Ontario  N2L 2Y5, Canada}
\affiliation{Department of Physics and Astronomy, University of Waterloo, Waterloo, Ontario N2L 3G1, Canada}
\author{Yijian Zou}
\thanks{These authors contributed equally.}
\affiliation{Stanford Institute for Theoretical Physics, Stanford University, Palo Alto, CA 94305, USA}
\author{Shengqi Sang}
\affiliation{Perimeter Institute for Theoretical Physics, Waterloo, Ontario N2L 2Y5, Canada}
\affiliation{Department of Physics and Astronomy, University of Waterloo, Waterloo, Ontario N2L 3G1, Canada}
\author{Timothy H. Hsieh}
\email{quantim5@gmail.com}
\affiliation{Perimeter Institute for Theoretical Physics, Waterloo, Ontario N2L 2Y5, Canada}

%\date{\today}% It is always \today, today,
             %  but any date may be explicitly specified

\begin{abstract}
The sign structure of quantum states is closely connected to quantum phases of matter, yet detecting such fine-grained properties of amplitudes is subtle.  Here we employ as a diagnostic measurement-induced entanglement (MIE)-- the average entanglement generated between two parties after measuring the rest of the system.  We propose that for a sign-free state, the MIE upon measuring in the sign-free basis decays no slower than correlations in the state before measurement.  Concretely, we prove that MIE is upper bounded by mutual information for sign-free stabilizer states (essentially CSS codes), which establishes a bound between scaling dimensions of conformal field theories describing measurement-induced critical points in stabilizer systems.  We also show that for sign-free qubit wavefunctions, MIE between two qubits is upper bounded by a simple two-point correlation function, and we verify our proposal in several critical ground states of one-dimensional systems, including the transverse field and tri-critical Ising models.  In contrast, for states with sign structure, such bounds can be violated, as we illustrate in critical hybrid circuits involving both Haar or Clifford random unitaries and measurements, and gapless symmetry-protected topological states.          
\end{abstract}

%\pacs{Valid PACS appear here}% PACS, the Physics and Astronomy
                             % Classification Scheme.
%\keywords{Suggested keywords}%Use showkeys class option if keyword
                              %display desired
\maketitle
%\tableofcontents

\section{Introduction}

The sign problem is an infamous bottleneck \cite{PhysRevB.41.9301,PhysRevLett.94.170201} for quantum Monte Carlo simulations, whose efficiency requires non-positive off-diagonal matrix elements of a Hamiltonian with respect to a local basis~\cite{10.1143/PTP.56.1454,gubernatis_kawashima_werner_2016,Bravyi2006complexity}.  However, recent works have shown that the sign problem is not all negative news due to the computational complexity; rather, there are positively fascinating connections to quantum phases of matter.  In particular, many topologically ordered phases of matter including the double semion phase have an ``intrinsic sign problem'' \cite{H16,RK17,SGR20}, defined by the lack of any local basis in which the Hamiltonian is sign-free.

Likewise, the sign structure of states/wavefunctions has also proven to be physically significant.  A wavefunction has an intrinsic sign problem if there is no finite depth circuit which can transform it into a non-negative wavefunction~\cite{H16, torlaiWavefunction2020}.  In Ref.~\cite{Ellison2010}, it was shown that many symmetry-protected topological (SPT) phases have a symmetry-protected sign problem, which no symmetric finite depth circuit can remove.  Moreover, the sign structure of highly excited and random states is intimately connected to their volume law entanglement scaling \cite{GF15}.

Given these physical aspects of sign structure in wavefunctions, how does one detect it without examining exponentially many amplitudes?  Sign-free states are still capable of exhibiting a variety of quantum phenomena including quantum criticality and topological order, so how do they differ in physical properties from sign-structured states?  In Ref.~\cite{H16}, Hastings found a property specific to sign-free states in one dimension and with zero correlation length.  For such states, the amount of entanglement that can be generated between two parties if one measures the rest of the system in the sign-free basis--a diagnostic we will call measurement-induced entanglement (MIE)-- is super-polynomially small in the distance between the two parties.  A rough picture for this result is that such states are approximately ``coherent Gibbs states'', whose amplitudes are Boltzmann weights of a local classical Hamiltonian, and thus measuring or fixing a configuration outside of two parties will lead to a factorized state between the two parties.  Quantities akin to MIE have also been utilized in \cite{Popp2005,ESBD12prl,Grover_2014,Rajabpour_2016,PhysRevB.101.115131,PhysRevB.95.045111,Vijay2020,li2021conformal}.

In this work, we use MIE to probe the sign structure of wavefunctions beyond zero correlation, with a focus on quantum critical states in which the correlation length is effectively infinite.  We find that in a variety of sign-free critical and off-critical systems, MIE must decay at least as quickly as the slowest decaying correlation function in the wavefunction before any measurement; colloquially, measuring sign-free states in the sign-free basis cannot generate significantly more correlations than those that already exist. 

Specifically, we show that for sign-free stabilizer states (essentially Calderbank-Shor-Steane or CSS codes), MIE is upper bounded by mutual information.  We apply this bound to the recently discovered \textit{measurement induced phase transition} (MIPT), which emerge from the competition between two types of non-commuting measurements \cite{Nahum2020, Sang2021,barkeshli,ippoliti,lang2020entanglement} or the competition between random unitary gates and measurements \cite{li2019measurement,nahum2018hybrid,nandkishore2018hybrid,hybridreview}.  We show that for MIPT within the stabilizer formalism, the sign-free condition implies a bound between the lowest scaling dimensions of certain boundary condition changing operators of the associated conformal field theory. 
We illustrate this bound in two measurement-only quantum circuits, one of which is a new critical model involving four-body Majorana measurements.  In contrast, critical states with sign structure, which arise from dynamics involving both measurements and unitary gates, can violate the above bound.  We illustrate this for both hybrid circuits involving Clifford random gates and those involving Haar random gates.  As a byproduct, we show that the scaling of MIE can distinguish between these two (Haar vs. Clifford) critical universality classes, while mutual information cannot.

We also show that for sign-free qubit wavefunctions, MIE between two qubits is upper bounded by a two-point correlation function, and we apply this bound to critical ground states of one-dimensional local Hamiltonians.  We illustrate the bound for the transverse field and tri-critical Ising models, and we show that the critical three-state Potts model obeys a similar bound despite its different local Hilbert space dimension.  Remarkably, we also find evidence that the MIE is a function of the conformal cross-ratio only.   Critical states with sign-structure can violate these bounds, as we illustrate for gapless SPTs \cite{scaffidi, verresen} in one dimension.  Finally, we conclude with several open questions regarding intrinsic sign problems for critical states. 

\section{Setup}

\begin{figure}
    \centering
    \includegraphics[width=1\columnwidth]{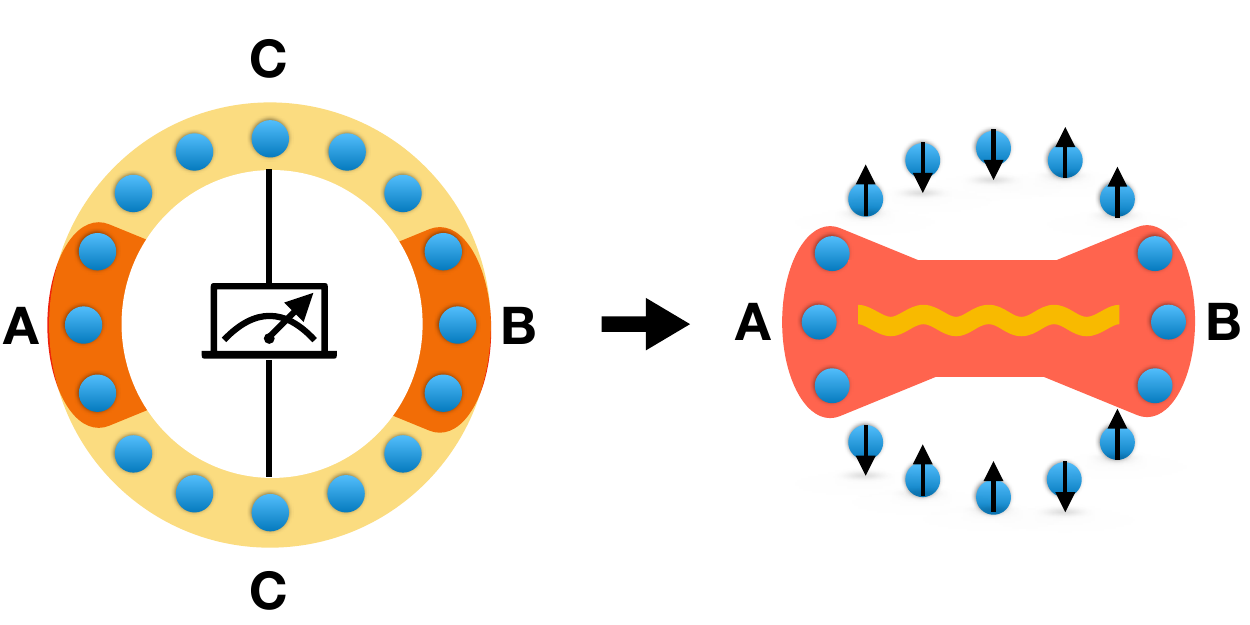}
    \caption{Measurement-induced entanglement.  We find that for sign-free states in many physical contexts, measuring $C$ in the sign-free basis cannot generate more entanglement between $A,B$ than correlations existing before measurement.  In contrast, the latter is achievable by states with sign structure.}
    \label{fig:MIEillustration}
\end{figure}

We now define the central quantity in this work: \textit{measurement induced entanglement} (MIE). 
We consider a system with $N$ sites, each site with local Hilbert space dimension $d$ (we will mostly focus on the case of qubits where $d=2$). 
After partitioning the system into three parts $A$, $B$, and $C$, we ask how much entanglement can be generated on average between $A$ and $B$ by measuring $C$ in a given basis, following Ref.~\cite{H16,Popp2005}.  We will be interested in partitions where $A,B$ are small intervals far apart and $C$ is everything else, see Fig.~\ref{fig:MIEillustration}.  
Specifically, for a wavefunction 
\beq
 |\psi\ra = \sum_{a,b,c} \psi_{abc}|abc\ra~,
\eeq
the probabilities and the resulting wavefunctions for obtaining a given outcome $c$ after measuring $C$ in the given computational basis are $p_c = \sum_{a,b}|\psi_{abc}|^2$ and $|\chi_c\ra = \sum_{a,b} \chi_{ab}|ab\ra$, respectively, where $\chi_{ab} = \psi_{abc}/\sqrt{p_c}$.
The MIE is defined as 
\beq
    \MIE(A:B) \equiv \sum_c p_c S_A(|\chi_c\ra)~,
\eeq
where the entanglement between $A$ and $B$ in the state $|\chi_c\ra$ is measured by  
von Neumann entropy $S_A(|\chi_c\ra)  = - \tr[\rho_{A,c} \ln \rho_{A,c}]$ and $\rho_{A,c} = \tr_B[|\chi_c\ra\la\chi_c|]$ is the reduced density matrix on $A$.

We will also use $\MIE_X$ to denote the MIE measured in the $X$-basis, and analogously for other bases.  If none is explicitly stated, the $Z$ (computational) basis is implied.
Note that MIE is very similar to the localizable entanglement proposed in Ref.~\cite{Popp2005}, but here we do not consider optimizing over the measurement basis to maximize the induced entanglement.

\section{Sign structure in Stabilizer states}

We begin by analyzing the consequences of the sign-free condition for stabilizer states, which are eigenstates of commuting Pauli operator strings called stabilizers.  We find that a stabilizer state that is sign-free in the $Z$-basis has a special structure: its stabilizers can always be chosen to consist of strings of only Pauli-$X$ operators or only Pauli-$Z$ operators --- such states are also known as Calderbank-Shor-Steane (CSS) codes \cite{Calderbank1996,Steane1996}.
See Appendix \ref{app:sign-free stablizer} for a detailed derivation.

\subsection{Bound on MIE for sign-free stabilizer states}

We show that for a stabilizer state that is sign-free in the computational basis, the measurement-induced entanglement is upper bounded by the mutual information
\beq\label{eq:MIE_stab_bound}
\MIE(A:B) \leq  \MI(A:B)
\eeq
where the mutual information $\MI(A:B)=S_A+S_B-S_{AB}$. 
In other words, one needs existing classical or quantum correlations to be able to generate entanglement after measuring $C$ in the $Z$-basis if the stabilizer state is sign-free.

To prove this bound, we build on the structure theorem for stabilizers \cite{Bravyi2006}, which states that for a stabilizer state $|\psi\rangle$ supported on $A,B,C$, there exist local unitaries $U_A, U_B, U_C$ which deconstruct $|\psi\rangle$ into a direct product of three classes of simple states:
{\small
\beq
    \label{eq:structural_thm_elementary}
   U_A U_B U_C \ket{\psi} 
    &&= 
    \ket{\mathrm{GHZ}}_{ABC}^{\otimes g_{ABC}} \nonumber\\
    &&\otimes
    \ket{\mathrm{EPR}}_{AB}^{\otimes e_{AB}} \otimes
    \ket{\mathrm{EPR}}_{BC}^{\otimes e_{BC}} \otimes
    \ket{\mathrm{EPR}}_{CA}^{\otimes e_{CA}} \nonumber
    \\ 
    &&\otimes \ket{0}_A^{\otimes s_{A}}
    \otimes \ket{0}_B^{\otimes s_{B}}
    \otimes \ket{0}_C^{\otimes s_{C}}, 
\eeq
}%
where $|\mathrm{GHZ}\rangle = \frac{1}{\sqrt{2}}(|000\rangle + |111\rangle)$ and $|\mathrm{EPR}\rangle = \frac{1}{\sqrt{2}}(|00\rangle + |11\rangle)$.

For our purposes, we need a finer structure theorem specific to the sign-free property. In Appendix \ref{app:structure_thm}, we show that for a sign-free stabilizer $|\overline{\psi}\rangle$, there exist local unitaries $\overline{U}_A, \overline{U}_B, \overline{U}_C$ that preserve the sign-free structure (that map strings of Pauli $X$ ($Z$) operators to strings of Pauli $X$ ($Z$) operators) such that  
{\small
\beq
    \label{eq:structural_thm_CSS}
   \overline{U}_A \overline{U}_B \overline{U}_C \ket{\overline{\psi}} 
   &&=
    \ket{\mathrm{GHZ}}_{ABC}^{\otimes g_{ABC}} \otimes \ket{\mathrm{GHZ'}}_{ABC}^{\otimes g'_{ABC}} \nonumber \\
    &&\otimes
    \ket{\mathrm{EPR}}_{AB}^{\otimes e_{AB}} \otimes
    \ket{\mathrm{EPR}}_{BC}^{\otimes e_{BC}} \otimes
    \ket{\mathrm{EPR}}_{CA}^{\otimes e_{CA}} \nonumber
    \\ 
    &&\otimes \ket{0}_A^{\otimes s_{A}}
    \otimes \ket{0}_B^{\otimes s_{B}}
    \otimes \ket{0}_C^{\otimes s_{C}} \nonumber
    \\
    &&\otimes \ket{+}_A^{\otimes s'_{A}}
    \otimes \ket{+}_B^{\otimes s'_{B}}
    \otimes \ket{+}_C^{\otimes s'_{C}}, 
\eeq
}%
where $|\mathrm{GHZ}'\rangle = \frac{1}{\sqrt{2}}(|+++\rangle + |---\rangle)$ and $\ket{\pm}$ are eigenstates of the $X$ operator: $X\ket{\pm}=\pm \ket{\pm}$.
The fact that $\overline{U}_C$ is sign-free preserving implies that it commutes with any projective measurement on $C$ in the $Z$ basis, and therefore the calculation of MIE reduces to evaluating it for the multiple simple states above.  It is straightforward to check that the only states which give non-zero contribution to either MIE or MI are $|\mathrm{GHZ}\rangle_{ABC}$, for which $\MIE(A:B)=0$, $\MI(A:B)=1$, $|\mathrm{GHZ}'\rangle_{ABC}$, for which $\MIE(A:B)=\MI(A:B)=1$, and  $|\mathrm{EPR}\rangle_{AB}$, for which $\MIE(A:B) =1$ and $\MI(A:B)=2$.  This proves the bound $\MIE(A:B) \leq  \MI(A:B)$ for sign-free stabilizers.  We present an alternative proof in Appendix~\ref{app:alternative_derivation}.  The sign-free stabilizer states which saturate the bound consist (up to local unitaries) of $|\mathrm{GHZ}'\rangle_{ABC}$ states and states which do not contribute to either $\MIE(A:B)$ or $\MI(A:B)$: $|\mathrm{EPR}\rangle_{AC}$, $|\mathrm{EPR}\rangle_{BC}$, and direct product states.

\subsection{Violation of the bound due to sign structure}

Here we provide an example of a stabilizer state with sign structure, which can have nonzero $\MIE(a:b)$ despite having zero $\MI(a:b)$ for two qubits $a,b$.
Consider the one-dimensional cluster state with periodic boundary conditions, stabilized by the set $\{X_i Z_{i+1} X_{i+2}\}$ for every site $i$. This state has zero correlation length; the mutual information between distant qubits is zero.  However, MIE from measuring in the $Z$ basis is infinitely long-ranged due the string order parameter of the cluster state.  The string is obtained by multiplying every other stabilizer to cancel intermediate $X$s, yielding an operator $X_a Z_{a+1}...Z_{b-1} X_b$ with only $Z$s or identities in the middle of the string.  Being a product of stabilizers, such a string operator is also a stabilizer.  Similarly, the global Ising operator $\prod Z$ is also a stabilizer.  Hence, measuring in the $Z$-basis all but two qubits at $a,b$ with even distance results in a residual state on $a,b$ stabilized by $X_a X_b, Z_a Z_b$, which is an EPR pair with one bit of entanglement.  As $a,b$ can be arbitrarily far apart, the $\MIE$ does not decay.
This is the ``quantum wire'' property of 1d SPTs \cite{ESBD12prl}, a manifestation of a symmetry-protected sign problem in the symmetry charge ($Z$) basis \cite{Ellison2010}.

\section{MIE in measurement-driven critical states}

In this section, we apply our bound to derive consequences for critical stabilizer states arising from quantum dynamics involving both unitary evolution and projective measurements.  In particular, we consider random circuits in which each local operation is either a Clifford unitary (which maps Pauli strings to Pauli strings in the Heisenberg representation) or Pauli measurements.  Such circuits acting on a product state $\otimes |+\rangle$ preserve the stabilizer nature of the state at all times.  Remarkably, the ensemble of steady states produced by many different realizations of the random circuit can exhibit criticality described by a conformal field theory (CFT)~\cite{li2021conformal,jian2020}.

In more detail, after averaging over the randomness in the dynamics, entanglement entropy of a subregion of the steady state is mapped to the free energy cost of imposing different boundary conditions on the subregion and its complement \cite{rtn, PhysRevB.100.134203,jian2020,PhysRevB.101.104301}. In a CFT, such a change of boundary conditions can be achieved by inserting boundary condition changing operators. Thus, entanglement entropy of an interval can be mapped to two-point correlation functions of boundary operators, and entanglement entropy of two disjoint intervals $A=[x_1, x_2]$ and $B=[x_3, x_4]$ can be mapped to four-point correlation functions of boundary operators. 

As a direct consequence of the conformal symmetry, a four-point correlation function is a function only
of the \emph{cross ratio} $\eta$ \cite{belavin_infinite_1984}, which is defined as $\eta \equiv {w_{12}w_{34}}{w_{13}^{-1}w_{24}^{-1}}$, where $w_{ij}$ is the chord distance, i.e., $w_{ij}=\sin(\pi (x_i-x_j)/L)$. Importantly, the cross ratio $\eta$ is invariant under the conformal transformation. Therefore, we can extract universal exponents of the CFT by understanding the relationship between entanglement entropy of two disjoint intervals and $\eta$. Specifically, at small $\eta$ (when A and B are small and distant), MI scales as $\eta^{h_{\mathrm{MI}}}$, where $h_{\mathrm{MI}}$ is a universal exponent given by the scaling dimension of a boundary condition changing operator. 

Ref.~\cite{li2021conformal} found that MIE can also be mapped to four-point correlation functions of a different boundary condition changing operator. Thus MIE is also a function of the cross ratio, and the exponent $h_{\mathrm{MIE}}$ in $\MIE(A:B) \sim \eta^{h_{\mathrm{MIE}}}$ is also universal and given by a different scaling dimension.  To avoid clutter, we have suppressed the measurement basis for MIE but will detail it in the following specific examples. 

It follows from our bound that given any critical ensemble consisting of sign-free stablizer states, for any fixed $\eta$, the average of MIE must be no greater than the average of MI. 
More specifically, we assume the hybrid dynamics generates an ensemble of steady states $\rho=\sum_{n}p_n |\psi_n\rangle \langle \psi_n|$, where $|\psi_n\ra$ is a stabilizer state. 
If such a stabilizer state is sign-free, we would have $\MIE_n(A:B) \leq \MI_n(A:B)$, where the subscript $n$ denotes the associated quantity for $|\psi_n\ra$. 
Somewhat abusing the notation, we therefore have the inequality for the ensemble-averaged MIE and MI as $\MIE(A:B) \equiv \sum_n p_n\MIE_n(A:B) \leq \MI(A:B) \equiv \sum_n p_n\MI_n(A:B)$.
Since $\MIE(A:B) \sim \eta^{h_{\mathrm{MIE}}}$ and $\MI(A:B) \sim h_{\mathrm{MI}}$, the associated conformal field theory must have
\beq
h_{\mathrm{MIE}}\geq h_{\mathrm{MI}} \label{eq:CFTbound}
\eeq
for any measurement basis in which the states are sign-free.  It is remarkable that sign structure--naively, a UV property of the full many-body wavefunction--places constraints on the IR physics in this context.

\subsection{X-ZZ measurement-only critical point}\label{subsec:XXZ measurement}

\begin{figure}
    \centering
    \subfloat[]
    {\includegraphics[width=0.24\textwidth]{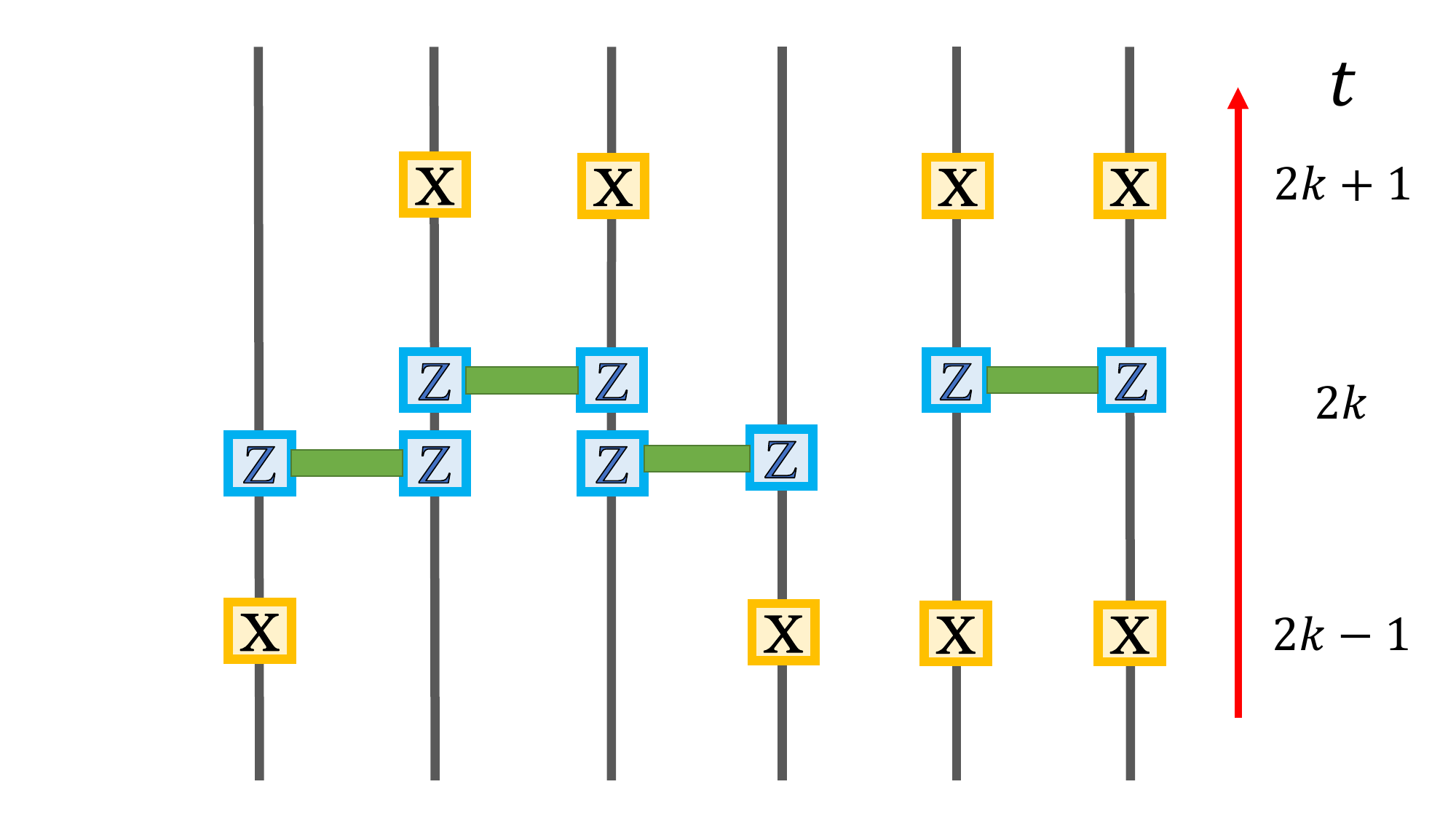}}
    \subfloat[]
    {\includegraphics[width=0.24\textwidth]{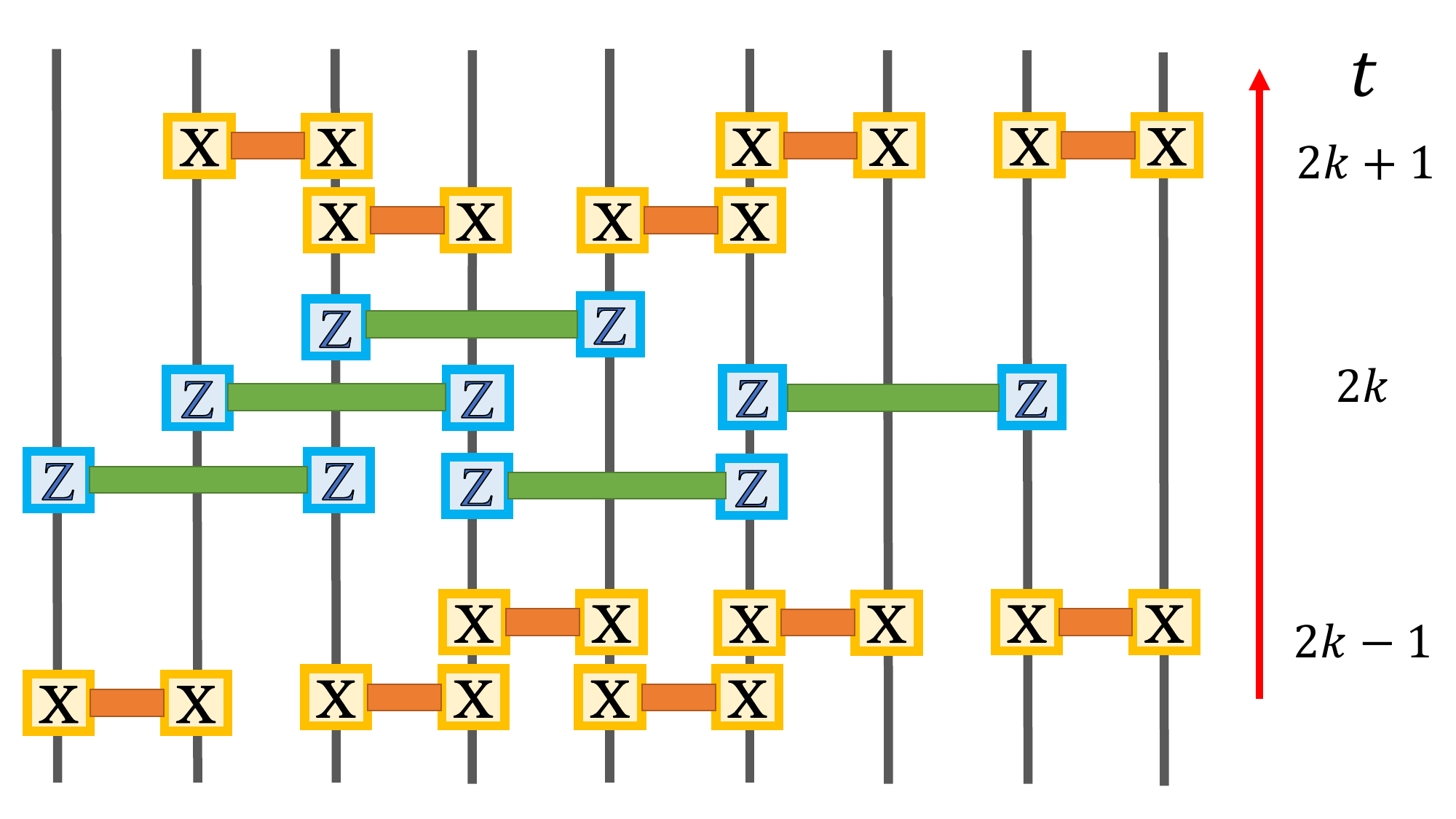}}
    \\
    \subfloat[]
    {\includegraphics[width=0.24\textwidth]{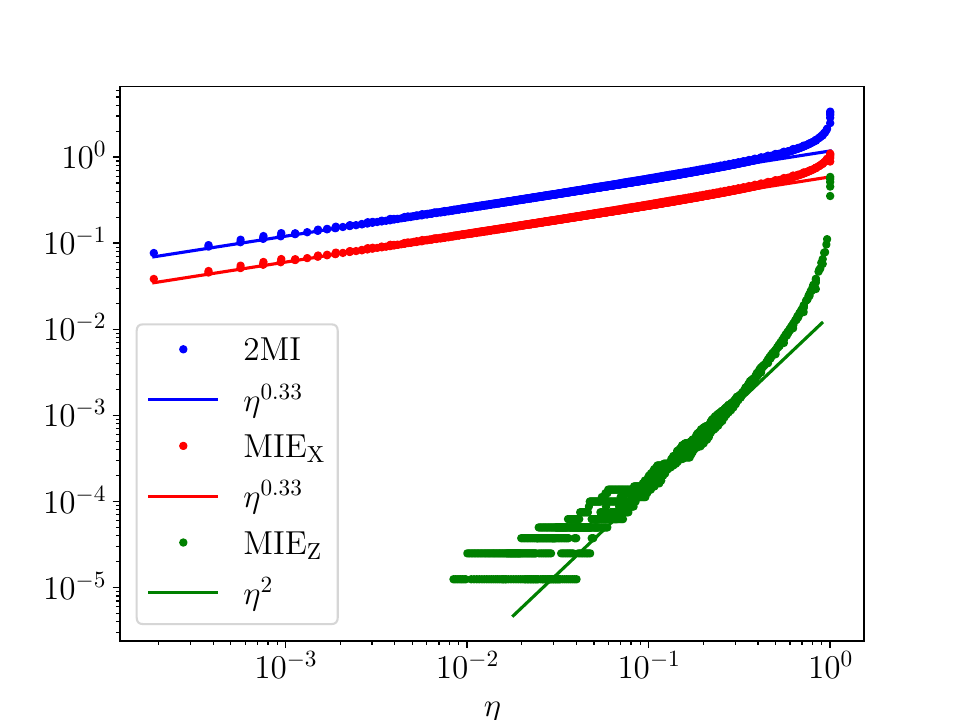}}
    \subfloat[]
    {\includegraphics[width=0.24\textwidth]{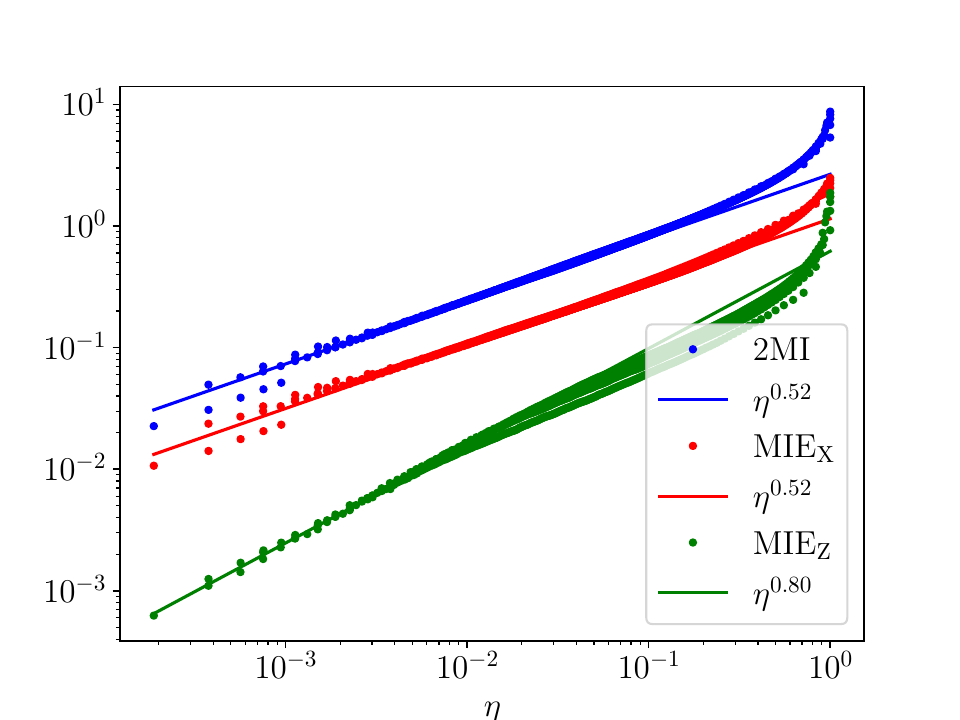}}
    \caption{Measurement-only dynamics and data at two critical points.  (a) Single-site $X$ and nearest-neighbor $ZZ$ measurements are performed with equal probability $p=1/2$.  (b) Nearest-neighbor $XX$ and next-nearest-neighbor $ZIZ$ measurements are performed with equal probability $p=1/2$.  (c)(d) MI and MIE for the steady states of the circuits in (a) and (b). For both cases, we have system size $L=512$ and average over 80000 samples for a single data point.}
    \label{fig:measurementonly}
\end{figure}

As the simplest example exhibiting this bound, we start with the critical point arising from competing $X$ and $ZZ$ measurements, both of which preserve sign-free (or CSS) structure.
This dynamics has been studied in Refs. \cite{ Nahum2020, Sang2021,lang2020entanglement,Sang2021b}, and we summarize it in the following.
At every odd time step, we perform $X$ measurement on each pair of neighbouring qubits $(i, i+1)$ with probability $p$, and at every even time step, we measure each qubit in the $ZZ$ basis with probability $1-p$, as illustrated in Fig.~\ref{fig:measurementonly}(a). 

Refs.~\cite{lang2020entanglement, Sang2021b} show that this $X$-$ZZ$ measurement dynamics, viewed as a (1+1)d quantum circuit, can be faithfully mapped to a bond percolation problem on the square lattice, where the presence(absence) of $X$($ZZ$) measurements corresponds to the broken vertical(horizontal) bonds. 
Furthermore, one can understand the wavefunction at any time as a tensor product of several multi-qubit GHZ states:
\begin{equation}\label{eq:zzx_wavefunction}
    \ket{\phi}=\ket{\text{GHZ}}_{\Sigma_1}\otimes...\otimes \ket{\text{GHZ}}_{\Sigma_k}~,
\end{equation}
where $\{\Sigma_1,...,\Sigma_k\}$ is a partition of all the qubits that is specified by the circuit's ``history'': two qubits belong to the same cluster (same $\Sigma$) if and only if they are connected by some percolating cluster in the 2d layout of the circuit. 
Note that this state is clearly sign-free in the $X$ or $Z$-basis.
When $p=1/2$, the $X$-$ZZ$ measurement dynamics undergoes a phase transition described by the 2d percolation CFT.

To compare MI and MIE between regions $A,B$ at this critical point, we define the following two sets:
{\small
\begin{align}
    s_1 &= \{\Sigma_i: \Sigma_i\cap A\neq \varnothing, \Sigma_i\cap B\neq \varnothing, \Sigma_i \cap \overline{A\cup B} = \varnothing\}~, \notag \\ 
    s_2 &= \{\Sigma_i: \Sigma_i\cap A\neq \varnothing, \Sigma_i\cap B\neq \varnothing, \Sigma_i \cap \overline{A\cup B} \neq \varnothing\}~. \notag
\end{align}
}%
i.e., $s_1 (s_2)$ consist of GHZ clusters supported exclusively (not exclusively) on $A,B$.   

Then we have
\begin{align}
    \MI(A:B)&=2|s_1|+|s_2|~,\\
    \MIE_X(A:B) &= |s_1|+|s_2|~,\\
    \MIE_Z(A:B) &= |s_1|~.
\end{align}
The percolation CFT predicts that both $|s_1|$ and $|s_2|$ are functions of the cross ratio $\eta$ only.
In particular, for small $\eta$, $|s_1| \propto \eta^{2}$ and $|s_2| \propto \eta^{1/3}$  \cite{lang2020entanglement, Sang2021b}. 
%\jacob{should these exponents be negative or positive?}
%\shengqi{should be positive}
Therefore,
\begin{align}
    \MI(A:B)&\sim \eta^{1/3},\\
    \MIE_X(A:B) &\sim \eta^{1/3}, \\
    \MIE_Z(A:B) &\sim \eta^{2}~,
\end{align}
which is consistent with both our numerical simulation 
in Fig.~\ref{fig:measurementonly}(c) and our bound Eq.~(\ref{eq:CFTbound}).

\subsection{XX-ZIZ measurement-only critical point}

The steady states above are relatively simple as they are direct products of GHZ states.  Here, as a more non-trivial illustration of our bound, we consider the same random architecture as above, but we replace $X$ and $ZZ$ measurements by $XXI$ and $ZIZ$ measurements for three nearest-neighbor qubits, as illustrated in Fig.~\ref{fig:measurementonly}(b).  Again, the wavefunction generated by these measurements is always sign-free. 
This model is an interacting generalization of the dynamics in Sec.~\ref{subsec:XXZ measurement} : using Jordan-Wigner transformation, the $X$-$ZZ$ measurements map to two-Majorana parity measurements, while the $XX$-$ZIZ$ measurements map to four-Majorana parity measurements.

When $p=1/2$ (both measurements are done with equal probability), the dynamics is invariant, on average, under translation by a single Majorana mode and expected to yield a critical state \cite{Nahum2020,Hsieh2016}.  Indeed, we find evidence of conformal invariance of the steady state and show the data for MI, $\MIE_X$, and $\MIE_Z$ in Fig.~\ref{fig:measurementonly}(d). 
We see that the bound Eq.~(\ref{eq:MIE_stab_bound}) is again almost saturated and the exponent for MI is the same as the exponent for $\MIE_X$ within the margin of numerical error. This serves as another illustration of the bound Eq.~(\ref{eq:CFTbound}).

\begin{comment}
\begin{figure}
    \centering
    \includegraphics[width=\columnwidth]{4Majorana.pdf}
    \caption{Caption \jacob{same as Fig.1}}
    \label{fig:4Majorana}
\end{figure}
\end{comment}

\subsection{Measurement-driven criticality with sign structure}

\begin{figure*}
    \centering
    \hspace{-0.08\linewidth}
    \subfloat[]{
    \includegraphics[width=0.40\linewidth]{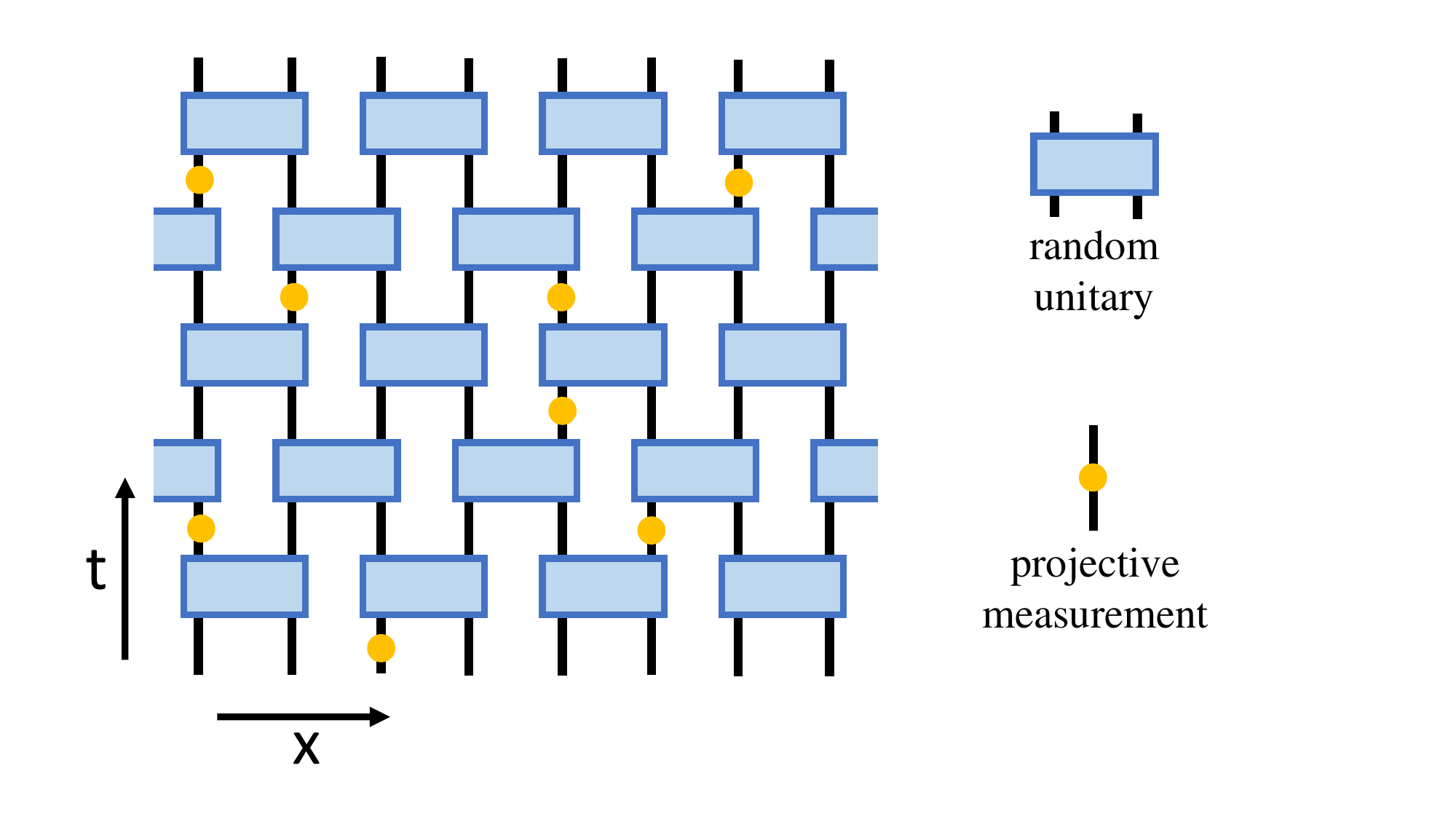}
    }
    \hspace{-0.08\linewidth}
    \subfloat[]{
    \stackinset{l}{0.06\linewidth}{b}{0.03\linewidth}
    {\includegraphics[width=0.07\linewidth]{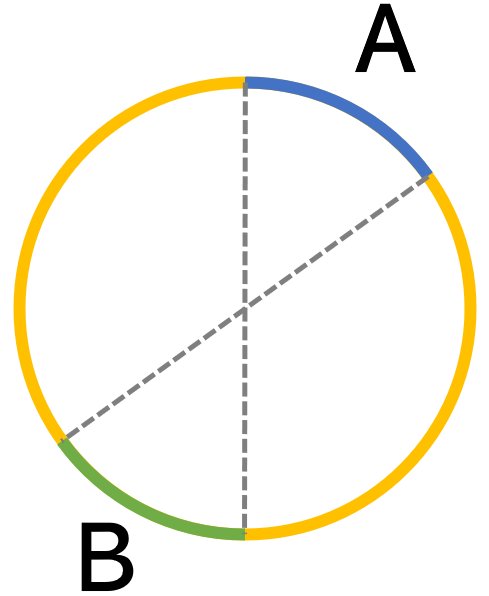}}
    {\includegraphics[width=0.32\linewidth]{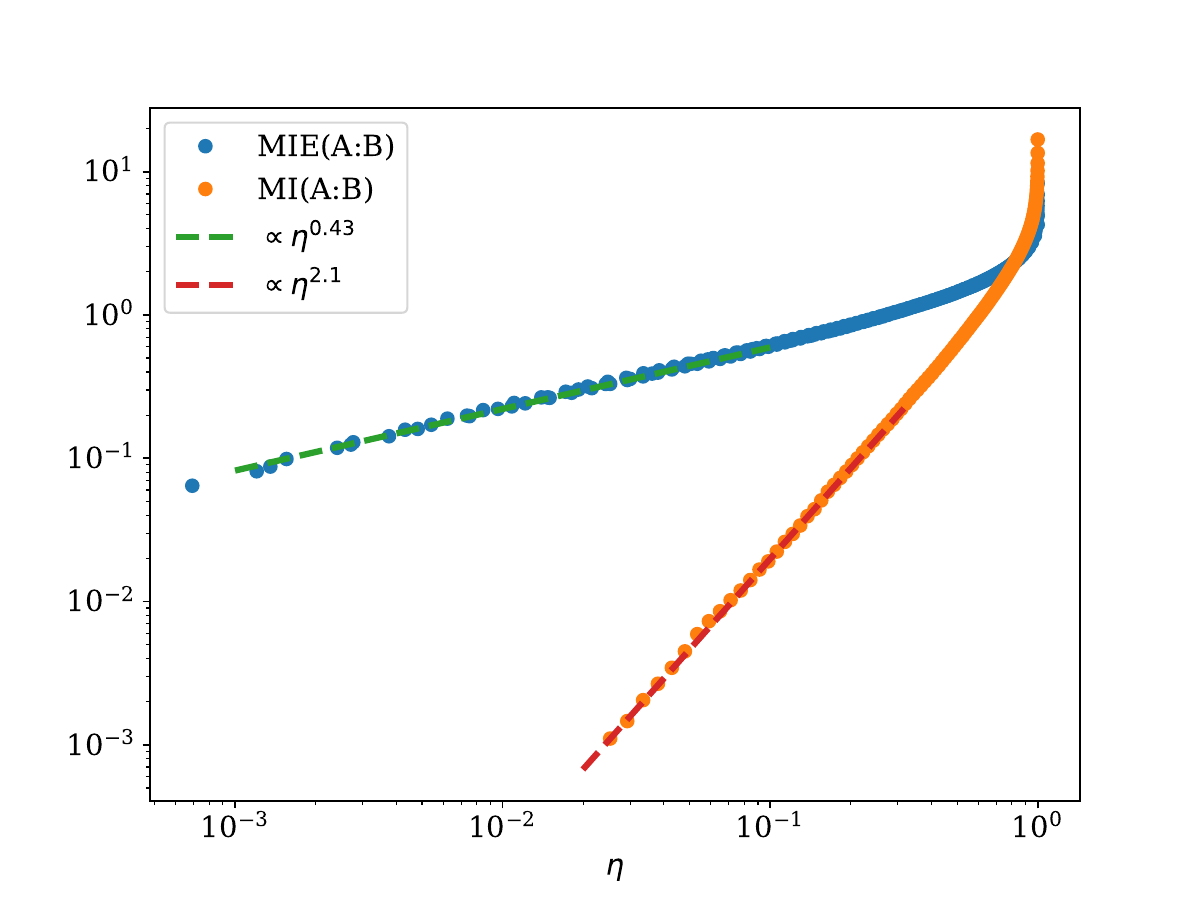}}
    }
    \subfloat[]{
    \stackinset{r}{0.03\linewidth}{b}{0.06\linewidth}
    {\includegraphics[width=0.10\linewidth]{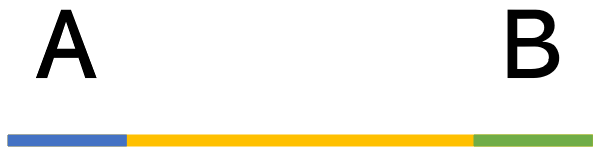}}
    {\includegraphics[width=0.30\linewidth]{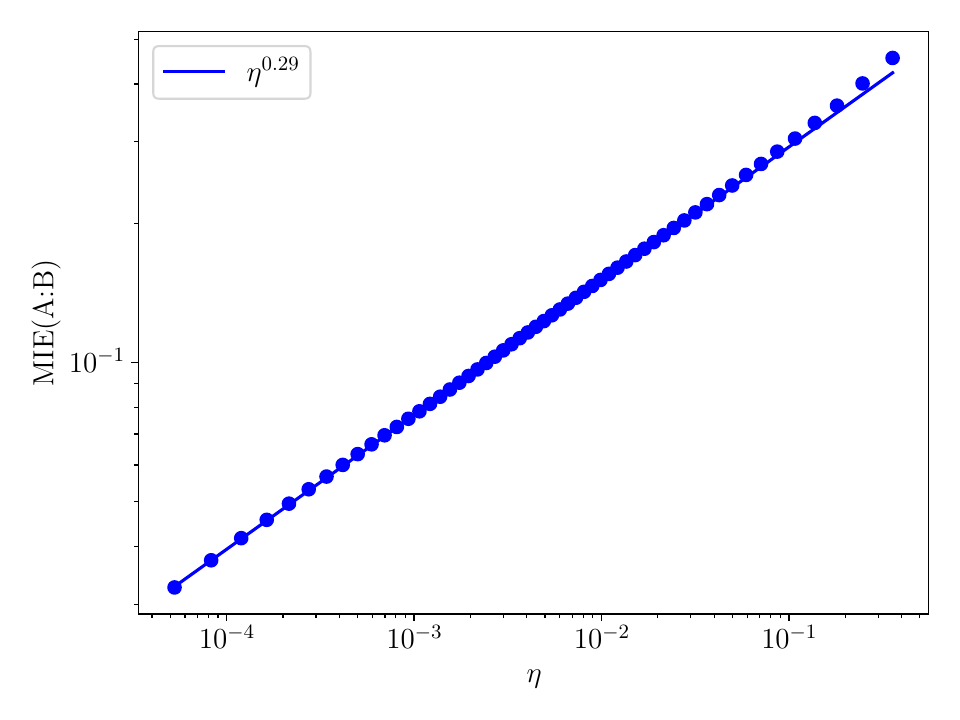}}
    }
    \caption{(a) Hybrid quantum circuits with random unitary gates (blue boxes) in a brick-wall arrangement and single-site measurements (yellow dots) performed with probability $p$. (b) MI and MIE between $A$ and $B$ in Clifford-MIPT, with $L=512$ (c) MIE between $A$ and $B$ in Haar-MIPT, with $L=64$.}
    \label{fig:CliffordHaarMIE}
\end{figure*}

In contrast to the above measurement-only dynamics, hybrid circuits involving random unitaries interspersed with single-site measurements performed with probability $p$ generate steady states with a non-trivial sign structure.  For large $p$ the frequent measurements produce area-law entangled steady states and for small $p$, the unitaries dominate and produce volume-law entangled steady states.  There is a critical value $p_c$ at which a MIPT occurs; at this point, the steady state for one-dimensional systems has logarithmic scaling of entanglement and is described by a CFT \cite{PhysRevB.100.134306, li2021conformal}.

Previous numerical and theoretical studies \cite{li2021statistical, zabalo2022operator} suggest that critical properties of MIPT depend on the choice of the ensemble of random unitary gates. Here we consider two kinds of ensembles: Clifford random unitary gates, and Haar random unitary gates. Their corresponding MIPTs are referred to as Clifford-MIPT and Haar-MIPT, respectively. In both cases we compute the MIE at the critical point averaged over the steady states in the time evolution and measurement outcomes. 
Note that in the MIPT setting, MIE is always calculated after a layer of random unitaries, so the measurement basis is not important.  In our simulations, we fix all measurement basis to $Z$.

For the Clifford-MIPT, we simulated the dynamics over a periodic 1D qubit chain with $L=512$ qubits, using the algorithm proposed in \cite{aaronson2004improved}. We observed that at the critical point $p_c=0.16$, $\text{MIE}\propto \eta^{0.43}$ and $\text{MI}\propto \eta^{2.1}$ when the cross ratio $\eta$ is small, agreeing with numerical results reported in \cite{li2021conformal, li2019measurement}. 
Thus, $h^{\mathrm{Clifford}}_{\text{MI}}>h^{\mathrm{Clifford}}_{\text{MIE}}$, and the violation of the bound Eq.~(\ref{eq:CFTbound}) serves as a direct signature of the non-trivial sign structure in this case.

For the Haar-MIPT, we use matrix-product-state methods with maximal bond dimension $D=1200$ to simulate the time evolution and compute the MIE for system sizes up to $L=64$. 
The truncation error is of the order of $10^{-6}$ at each evolution step, and the the statistical error for MIE is less than $10^{-3}$. (See Appendix~\ref{app:sampling} for details of the sampling algorithm.)

In Fig,~\ref{fig:CliffordHaarMIE}(c), we find that MIE is a power-law function of the cross ratio $\eta$ for small $\eta$, and the power $h^{\mathrm{Haar}}_{\MIE}\approx 0.29$. 
This power does not rely too much on the precision of the critical point and the system size. 
For example, for $L=48$ and $p=0.17$ (not shown in the figure), we find a consistent exponent $h^{\mathrm{Haar}}_{\MIE}\approx 0.30$, while in Fig.~\ref{fig:CliffordHaarMIE}(c), we have $h^{\mathrm{Haar}}_{\MIE}\approx 0.29$ for $L=64$ and $p=0.17$.  In contrast, the mutual information exponent for Haar-MIPT is $h^{\mathrm{Haar}}_{\text{MI}}\approx 2.0$ \cite{nahum2018hybrid, zabalo2020critical}. 
The fact that MIE for Haar-MIPT is even more long-ranged than for Clifford-MIPT ($h^{\mathrm{Haar}}_{\MIE}<h^{\mathrm{Clifford}}_{\MIE}=0.43$) reflects the fact that Haar sign structure is even more complex than the Clifford sign structure, which is discrete $(\pm 1, \pm i)$.

Note that $h^{\mathrm{Clifford}}_{\text{MI}} \approx h^{\mathrm{Haar}}_{\text{MI}}\approx 2.0$, and thus mutual information fails to distinguish the two universality classes.  In contrast, we find that the MIE power laws are able to distinguish the two.

As a brief aside, in Ref.~\cite{li2021conformal}, it was conjectured based on conformal invariance that $h_{\MIE}$ is the same as half of the surface critical exponent $\eta_{||}$, which was numerically shown to hold for Clifford MIPT. 
If the conjecture were true in the Haar random MIPT, it would imply $\eta_{||}\approx 0.58$. 
Nevertheless, this does not agree with a previous exact diagonalization calculation which found $\eta_{||}\approx 0.39$. 
We therefore conclude that either the conjecture does not hold for Haar MIPT or the previous exact diagonalization calculation has significant finite-size corrections.

\section{MIE Bound for Sign-Free States of Qubits}

After demonstrating the effects of being sign-free in the dynamical setting, we now turn our attention to ground states of local Hamiltonians.  First, we present a bound on MIE assuming that the system consists of qubits and $A,B$ are each single qubits.

Following Ref.~\cite{Popp2005}, it is convenient to consider concurrence \cite{PhysRevLett.80.2245} as a measure of entanglement between two qubits.  Given a two-qubit state $\ket{\psi_{AB}}$, the concurrence is defined as $|\langle \psi^*_{AB}|Y_A Y_B|\psi_{AB}\rangle|$, where $Y$ is the Pauli-Y operator.  

Consider a system of qubits partitioned into single qubits $A,B$, and their complement $C$, and let $\ket{\psi}=\sum_{abc} \psi_{abc} |abc\rangle$ be a state that is sign-free ($\psi_{abc} \geq 0$).  Then the measurement-induced concurrence is
\beq
\MIC&=&\sum_c p_c |\langle \psi^*_{AB,c}|Y_A Y_B|\psi_{AB,c}\rangle| \\
&=&\sum_c \left|\sum_{ab,a'b'} \psi_{abc} \psi_{a'b'c} \langle a'b'|Y_A Y_B |ab\rangle\right|
\eeq
Note that $\langle a'b'|Y_A Y_B |ab\rangle=-1$ if $ab=00,a'b'=11$ (or vice versa) and $\langle a'b'|Y_A Y_B |ab\rangle=1$ if $ab=01,a'b'=10$ (or vice versa).

Let us first consider the special case that the system has an Ising symmetry $\prod_j Z_j$.  Then for a fixed outcome $c$, the complement $ab$ has fixed parity (either even or odd).  
Thus for a given $c$, $\langle a'b'|Y_A Y_B |ab\rangle$'s are either all $+1$ or all $-1$, and the absolute value is taken.
We therefore have
\beq
\MIC&=&\sum_c \sum_{ab,a'b'} \psi_{abc} \psi_{a'b'c} \langle a'b'|X_A X_B |ab\rangle \\ \notag
&=& \langle \psi|X_A X_B|\psi\rangle~,
\eeq
which was first noted in Ref.~\cite{Popp2005}.
For a general sign-free wavefunction without assuming an Ising symmetry, the right hand side is an upper bound since the opposite signs can only reduce the absolute value.
Hence, in general
\beq
\MIC\leq \langle \psi |X_A X_B|\psi\rangle~.
\eeq
In addition, using the reality of the wavefunction, we have $\MIC\geq |\langle \psi|Y_A Y_B|\psi\rangle|$.
We therefore see that the MIC is bounded between two (disconnected) correlations in the wavefunction before measurement.  Furthermore, because von Neumann entropy is upper bounded by concurrence, $\mathrm{MIE}\leq \mathrm{MIC}$ and thus
\beq
\MIE \leq \langle \psi|X_A X_B|\psi\rangle. \label{eq:qubitbound}
\eeq
Note that with Ising symmetry, the right hand side is a connected correlation because $\langle X_A \rangle,\langle X_B\rangle$ vanish due to symmetry.

\section{ MIE in critical ground states}

We illustrate the above bound in the one-dimensional critical transverse field Ising model and tri-critical Ising model, both of which have an Ising symmetry and ground states which are sign-free. 
In Appendix~\ref{app:threestatePotts}, we also evaluate MIE in the 1d critical three-state Potts model, which is also sign-free; despite having local Hilbert space dimension $3$ and no Ising symmetry, we find that the decay of MIE is still bounded by the existing correlations before measurement.   
Finally, we compute MIE in a 1d gapless symmetry-protected topological (SPT) model, which has a nontrivial sign structure, and we find that the MIE is unrestricted in that case.

In all of the models considered below, we first use the density matrix renormalization group method to find the ground states of the models, and then sample over different measurement outcomes to calculate the MIE as shown in Appendix~\ref{app:sampling}.

\subsection{1d  transverse-field Ising model}

We first consider the critical model 
\begin{equation}
    H _{\text{Ising}}= -\sum_{j} X_j X_{j+1} - \sum_j Z_j~,
\end{equation}
with $L$ sites and periodic boundary conditions $L+1 \equiv 1$.
This model has the Ising symmetry $P \equiv \prod_j Z_j$.

\begin{figure}
    \centering
    \includegraphics[width=\columnwidth]{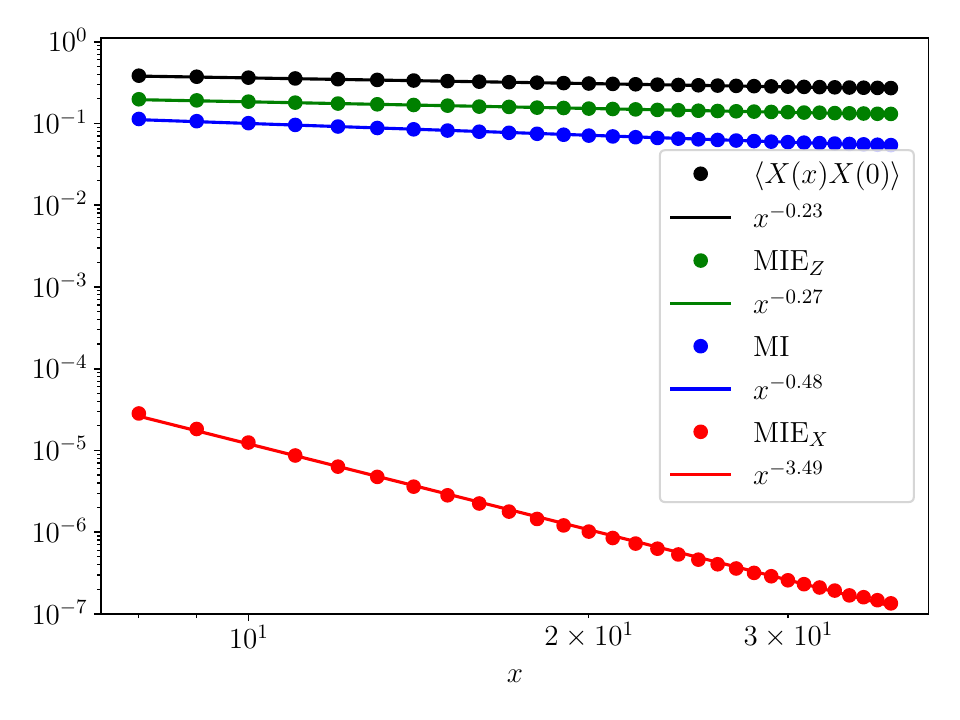}
    \caption{MI, MIE and the relevant correlation function in the critical quantum Ising model $H_{\text{Ising}}$ with system size $L=128$, where $x$ is the separation between qubits $A,B$.}
    \label{fig:MIE_Ising}
\end{figure}

It is well known that the mutual information of the critical Ising ground state scales as $x^{-1/2}$~\cite{calabrese2009} as shown in Fig.~\ref{fig:MIE_Ising}, while the slowest decaying correlation function is $\la X_A X_B \ra \sim x^{-1/4}$, where $x$ is the separation between the two qubits $A,B$.

After measurement, MIE still has a scaling form $\sim x ^{-\alpha}$. 
In particular, we observe that when measured in the $X$-basis, $\alpha \approx 3.5$ which is significantly larger than the slowest exponent; when measured in the $Z$-basis, the exponent of the MIE $\alpha \approx 0.27$, which is very close to the slowest scaling scaling $\la X_A X_B \ra \sim x^{-1/4}$, consistent with the bound Eq.~(\ref{eq:qubitbound}).   

\begin{figure}
    \centering
    \includegraphics[width=\columnwidth]{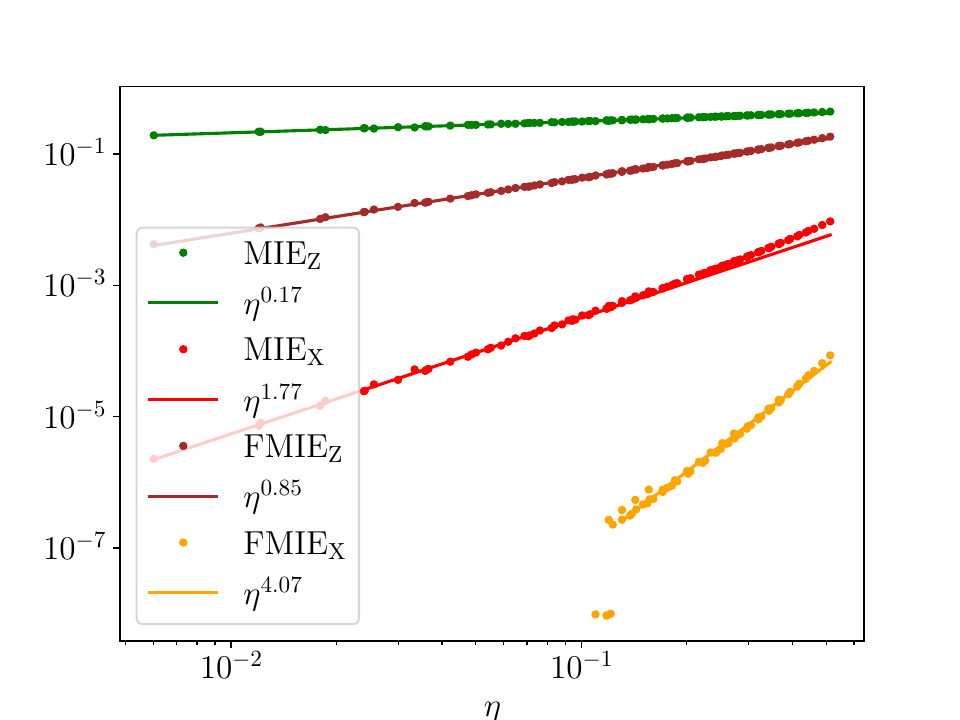}
    \caption{MIE and forced-measurement-outcome MIE (FMIE) versus cross-ratio $\eta$ in the critical quantum Ising model $H_{\text{Ising}}$ with system size $L=128$. $A$ and $B$ are intervals starting from two antipodal points, and their lengths span from 2 to 25.}
    \label{fig:MIE_eta_Ising}
\end{figure}

We also vary the sizes of $A,B$, and we find that remarkably MIE appears to be a function of the cross-ratio $\eta$ only and displays a simple power law $\eta^{\alpha/2}$ when $\eta$ is small (see Fig.~\ref{fig:MIE_eta_Ising}).  Note that the exponents differ slightly from those of Fig.~\ref{fig:MIE_Ising}, likely due to finite size effects and the larger $\eta$'s used.  
% We note that the exponents $\alpha$ for $\MIE_Z$ and $\MIE_X$ extracted from Fig.~\ref{fig:MIE_eta_Ising} are slightly larger than the same $\alpha$ extracted from Fig.~\ref{fig:MIE_Ising}, due to the larger range of $\eta$ considered. 
In Fig.~\ref{fig:MIE_eta_Ising} we also compare $\MIE$ with ``forced-measurement-outcome MIE'' (FMIE), or ``entanglement entropy after selective outcome'', studied in \cite{Najafi2016,Stephan2014,Rajabpour_2016}. In particular, $\mathrm{FMIE}_Z$ corresponds to measurement in $Z$ basis with $|0 \dots 0\rangle$ outcome (here $Z|0\rangle = |0\rangle$) and $\mathrm{FMIE}_X$ corresponds to measurement in $X$ basis with $|+\dots+\rangle$ outcome. Fig.~\ref{fig:MIE_eta_Ising} suggests that $\MIE$ is not dictated by a few measurement outcomes, which we further support in Appendix \ref{app:distribution}. Moreover, in \cite{Najafi2016,Stephan2014} it is claimed that $\mathrm{FMIE}_Z$ corresponds to the entanglement after imposing the free conformal boundary condition \cite{Cardy1986} in measurement region $C$ while $\mathrm{FMIE}_X$ corresponds to imposing fixed boundary conditions. Fig.~\ref{fig:MIE_eta_Ising} suggests that $\MIE_Z$ and
$\MIE_X$ may be characterized by similar quantities in conformal field theory, though the precise boundary condition is currently not known.

\subsection{1d tri-critical Ising model}
We also consider the O'Brien-Fendley model \cite{PhysRevLett.120.206403},
\begin{equation}
    H_{\text{OF}}=H_{\text{Ising}} +\lambda \sum_{j=1}^{N}(Z_jX_{j+1}X_{j+2}+X_jX_{j+1}Z_{j+2})~
\end{equation}
where we again use periodic boundary conditions and the parameter $\lambda = 0.428$ so that the ground state is at criticality in the tricritical Ising universality class.  The ground state is also sign-free for this parameter.
Note that the slowest decaying correlation function is given by $\la X_A X_B \ra \sim x^{-0.15}$ and the mutual information has a scaling $x^{-0.3}$~\cite{calabrese2009}.

The MIE measured in the $X$-basis has the scaling $x^{-\alpha}$ where $\alpha \approx 3.5$ while when measured in $Z$-basis, $\alpha \approx 0.14$, which is within numerical error of the $\la X_A X_B \ra$ exponent $0.15$. Again, this is consistent with the bound Eq.~(\ref{eq:qubitbound}).

\begin{figure}
    \centering
    \includegraphics[width=\columnwidth]{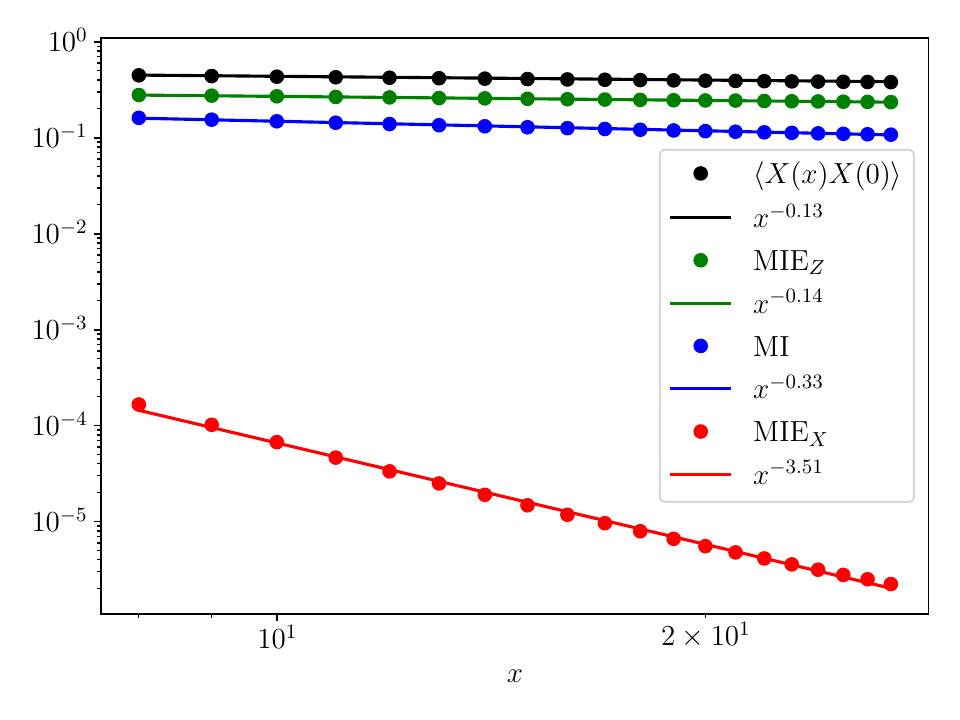}
    \caption{MI, MIE and the relevant correlation function in the tri-critical Ising model $H_{OF}$ with system size $L=96$, where $x$ is the separation between qubits $A,B$.}
    \label{fig:MIE_TCI}
\end{figure}

\subsection{1d gapless SPT model}

\begin{figure}
    \centering
    \includegraphics[width=\columnwidth]{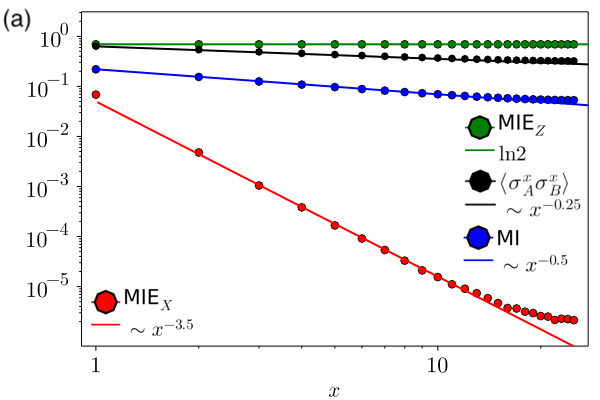}
    \includegraphics[width=\columnwidth]{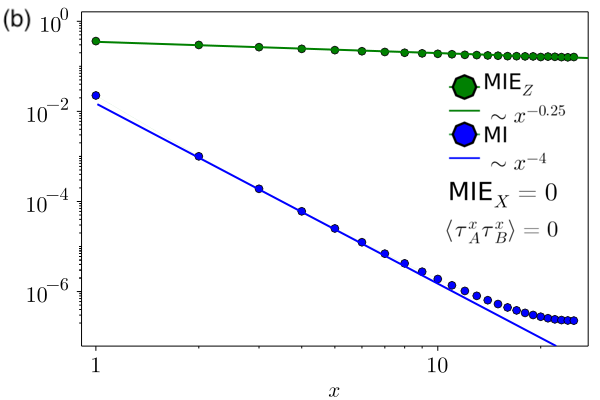}
    \caption{(a) MI, MIE and the relevant correlations in the gapless SPT model $H_{gSPT}$ when $A,B$ are $\sigma$ spins with the system size $L=100$, where $x$ is the separation between $A,B$. (b) Same as (a) but when $A,B$ are $\tau$ spins.  }
    \label{fig:MIE_gSPT}
\end{figure}

In this subsection, we show that it is possible to have a critical point with unrestricted MIE, with no such bound as Eq.~(\ref{eq:qubitbound}), when the wavefunction is not sign-free.
We consider the 1d model proposed in Ref.~\cite{scaffidi}, consisting of $L$ spins labeled as $\sigma$ on the integer sites $j =1 \dots L$ and $L$ spins labeled as $\tau$ on the sites $j+\frac{1}{2}$ where $j =1 \dots L$.
The Hamiltonian is 
\beq
%\label{eq:H_gSPT}
    H_{gSPT} =&
    -&\sum_j(\tau^x_{j-\frac{1}{2}}\sigma^z_j\tau^x_{j+\frac{1}{2}}  
    +\sigma^x_{j-1}\tau^z_{j-\frac{1}{2}}\sigma^x_{j}) \notag \\ 
    &-&\sum_j \sigma^x_j\sigma^x_{j-1}
\eeq
with the periodic boundary condition $L+j \equiv j$.

The two pieces of the Hamiltonian are the cluster state and ferromagnetic interactions, which compete to yield a critical point.  Despite being gapless, with open boundary conditions there are boundary modes protected by the global $Z_2 \times Z_2$ symmetry from flipping each sublattice; hence the name gapless SPT (see Ref.~\cite{scaffidi} for more details).  For our purposes, it is important that the ground state wavefunction is not sign-free in either the $Z$ or the $X$ basis, just like its gapped counterpart-- the cluster state.

In Fig.~\ref{fig:MIE_gSPT}, we numerically calculate MIE, MI, and relevant correlation functions from the ground state wavefunction obtained from density matrix renormalization group method with $L=100$.  As before, we take $A,B$ to be single qubits, either both on the $\sigma$ sublattice (Fig.~\ref{fig:MIE_gSPT}(a)) or $\tau$ sublattice (Fig.~\ref{fig:MIE_gSPT}(b)). 
All of the power laws observed numerically are derived in the Appendix~\ref{app:gSPTresults}.

A striking feature is that, if $A$ and $B$ are on the $\sigma$ sublattice, then the MIE after measuring in the $Z$-basis does not decay at all and is equal to $\ln 2$.  This is in stark contrast to MI and correlation functions which decay with power laws.  This constant MIE can be understood in a similar way as the gapped cluster state discussed earlier: \{$\sigma^x_j \tau^z_{j+\frac{1}{2}} \sigma^x_{j+1}$\} for all $j$ are conserved quantities and can be multiplied to create a string order parameter as in the cluster state. (Note the different Ising-symmetry convention.)
After the $Z$ measurements are performed, one again is left with an EPR pair.  
On the other hand, if $A$ and $B$ are on the $\tau$ sublattice, then $\MIE_Z$ decays with power $0.25$, slower than any correlation function on the $\tau$ sublattice. 
Both cases illustrate that, without the sign-free constraint, MIE can be more long-ranged than any correlation function in the wavefunction before measurement.

\section{Summary and Discussion}

The main claim we proposed in this work is that for sign-free states, the decay of MIE in the separation of $A$ and $B$ cannot be slower than that of correlations in the wavefunction before measurement. 
We considered two types of systems: stabilizer states and spin chains, with a focus on quantum criticality in both.

For the stabilizer states, we first noted that a sign-free stabilizer state is equivalent to a CSS code -- the generators of the stabilizers can be chosen to be consist of pure-X and pure-Z stabilizers. 
We then proved that $\MIE\leq \MI$ for a sign-free stabilizer state, which supports our proposed claim.
A technical way of understanding this result is that measurements in the $Z$-basis cannot generate $X$-type stabilizers which are required to have entanglement between $A$ and $B$ after measurements. 
Therefore MIE can be nonzero only if the $X$-type generator already exists between $A$ and $B$, which requires existing correlation or MI before the measurements.

We then applied this bound to sign-free measurement-induced critical points and showed that it constrains the scaling dimensions of certain boundary condition changing operators ($h_{\MIE}\geq h_{\MI})$.  We demonstrated these bounds in the steady states arising from 
$X$-$ZZ$ and $XX$-$ZIZ$ measurements.  In contrast, we showed that in critical points emerging from the Clifford-random/Haar random gates plus measurements, which are not sign-free, the MIE can decay slower than the MI.  Furthermore, the different values of $h_{\MIE}$ for Clifford and Haar random hybrid circuits serve as a way to distinguish the two universality classes. 

Next, we showed that for sign-free wavefunctions of qubits, the MIE between two qubits $A,B$ is upper bounded by a two-point correlation $\la X_A X_B \ra$, and we illustrated these results for the transverse-field Ising and tri-critical Ising model. 
In contrast, critical states with sign structure can have unrestricted MIE, as we demonstrated for the 1d gapless SPT.

There are many open questions worth exploring.  For a sign-free wavefunction, if one only assumes that it is the ground state of a local Hamiltonian, is there a general bound between MIE and MI/correlation functions?  Our results for the spin chain ground states (including the three-state Potts model in Appendix~\ref{app:threestatePotts}) suggest that the MIE decays with the same power as the slowest two-point correlation function and is a function of the cross-ratio only; how does one understand these results, or more generally interpret MIE, in the conformal field theories describing these critical points?

It would also be interesting to relate our work to intrinsic sign structure, in both topological order and critical states.  For example, it is natural to conjecture that the critical steady states of the hybrid circuits (with unitaries and measurements) have intrinsic sign structure that no finite depth unitary can remove.  Similarly, it is plausible that the 1d gapless SPT has a symmetry-protected sign problem like the gapped cluster state it descends from.  Are there gapless ground states in 1d with an intrinsic sign problem?  If so, how is the intrinsic sign structure manifested in the CFT data if there is conformal symmetry? And how can MIE be used to diagnose such intrinsic sign problems?  We leave these questions for the future. 

%\acknowledgment
\begin{acknowledgments}
We thank David Gosset and Yaodong Li for valuable discussions and feedback.
This work was supported by Perimeter Institute, NSERC, and Compute Canada.
Research at Perimeter Institute is supported in part by the Government of Canada through the Department of Innovation, Science and Economic Development Canada and by the Province of Ontario through the Ministry of Colleges and Universities. Y.~Z. is supported by the Q-FARM fellowship at Stanford University.
\end{acknowledgments}

\bibliography{ref.bib}% Produces the bibliography via BibTeX.

\newpage
\appendix
\onecolumngrid
\section{Structure of the sign-free stabilizer state}\label{app:sign-free stablizer}
In this appendix, we show that the structure of the stabilizer generators for a sign-free stabilizer state is equivalent to that for a CSS code, namely the generators of the stabilizer group can be chosen as pure-$X$ and pure-$Z$ Pauli strings. 
Assume we have $n$ qubits and the stabilizers are generated by the generators $S = \la g^1 \dots g^n \ra$, stabilizing the state $|\psi \ra$. 
The generators can be represented using a binary matrix. 
Here we use the convention $g^{(m)} = \alpha^{(m)}\prod_{i=1}^{n} X_i^{v_i^{(m)}} \prod_{i=1}^{n} Z_i^{w_i^{(m)}}$, where $v_i^{(m)}$ and $w_i^{(m)} = 0, 1$ and $\alpha^{(m)} = \pm 1, \pm i$. 
The check matrix is therefore 
\begin{equation}
    G = \left(\begin{array}{c|c} 
	V & W 
    \end{array}\right)~, 
\end{equation}
where $V$ and $W$ are $n \times n$ matrices with the matrix elements $V_{mi}=v_i^{(m)}$ and $W_{mi}=w_i^{(m)}$.
One can form a ``canonical'' basis of the generators by performing Gaussian eliminations on the check matrix $G$ (and update the signs $\alpha^{(m)}$ accordingly), resulting in a row-echelon form $G'$.
Assume $G'$ has the form 
\begin{equation}
    G' = \left(\begin{array}{c|c} 
	A & B \\
	O & C
    \end{array}\right)~, 
\end{equation}
where $A$ and $B$ are $r \times n$ binary matrices, $O$ is an $(n-r) \times n$ zero matrix and $C$ is an $(n-r) \times n$ matrix. 
And assume that the updated signs are $(\beta^{(1)}, \beta^{(2)} \dots \beta^{(r)},\beta_Z^{(1)}, \dots \beta_Z^{(n-r)})$.
That is to say, there are $k=n-r$ generators that are purely Pauli-Z denoted by $g_z^{(1)} \dots g_z^{(k=n-r)}$, and they generate the stabilizer set $S^Z = \la g_z^{(1)} \dots g_z^{(k)} \ra$. 

Let us show that $|\psi\ra$ has $2^r$ nonzero coefficients in the $Z$-basis $|\bfsigma\ra$.  
Consider the vector space $V_S^{Z}$ which is stabilized by $S^Z$. 
We have $\dim(V_S^{Z}) = 2^{r}$, and it is natural to use a subset of $|\bfsigma\ra$ as the basis for $V_S^{Z}$.
Since $S^{Z}$ is a subgroup of the stabilizer group $S$, we have $|\psi \ra \in V_S^{Z}$. 
More explicitly, 
\begin{equation}
|\psi\ra = \sum_{\bfsigma \in V_S^{Z}} \psi(\bfsigma)|\bfsigma\ra ~.
\end{equation}
On the other hand, $\psi(\bfsigma)$ has to be nonzero, which we show below. 
Define $\bar{S}^Z = \la g_1^{'} \dots g_r^{'} \ra$ which is the stabilizer set generated by the non pure-Z generators. 
Consider a stabilizer $g \in \bar{S}^Z$, and we decompose $g=\beta g_x * g_z$, where $\beta$ is the sign and $g_x$ ($g_z$) are Pauli-X (Pauli-Z) part of the generator, and we allow $g_z = I$.
Consider $g |\psi \ra $, or 
\begin{equation}
    g \sum_{\bfsigma \in V_S^Z}\psi(\bfsigma)|\bfsigma\ra = \sum_{\bfsigma \in V_S^Z} \alpha_g(\bfsigma)\psi(\bfsigma)|g(\bfsigma)\ra~,
\end{equation}
where $|g(\bfsigma)\ra = g_x |\bfsigma\ra$ and the additional sign $\alpha_g(\bfsigma) = \beta \cdot \gamma(\bfsigma)$ and $\gamma(\bfsigma) = \pm 1$ is from $g_z|\bfsigma\ra = \gamma(\bfsigma)|\bfsigma\ra$.
Since $g |\psi \ra = |\psi \ra$, we have $\alpha_g(\bfsigma)\psi(\bfsigma)=\psi(g(\bfsigma))$.
Now suppose $\exists |\bfsigma^*\ra \in V_S^Z$ such that $\psi(\bfsigma^*)=0$, then we also get $\psi(g(\bfsigma^*))=0$.
Looping over all the $g \in \bar{S}_Z$ will in fact give us $|\psi \ra =0$ which is absurd. 
Therefore, all the coefficients has to be nonzero, i.e., $\psi(\bfsigma) \neq 0$.  

Now it is easy to see the condition for a sign-free stabilizer state.
We would require $\alpha_g(\bfsigma) = \beta * \gamma(\bfsigma) = 1$ for all $g = \beta g_x g_z \in \bar{S}^Z$ and $\bfsigma \in V_S^Z$.
Assume both $g_z, -g_z \not\in S^Z$, then 
\begin{equation}
    g|\psi\ra = |\psi\ra= \frac{1}{\sqrt{2^r}} \beta g_x (\sum_{\bfsigma_+} |\bfsigma_+ \ra - \sum_{\bfsigma_-} |\bfsigma_- \ra), 
\end{equation}
where $g_z | \bfsigma_{\pm} \ra= \pm | \bfsigma_{\pm} \ra$, which would not be sign free.
Therefore, we have to have $g_z \in V_S^{Z}$ and $\beta=1$ or $-g_z \in V_S^{Z}$ and $\beta=-1$. 
This also implies that, a sign free stabilizer state can be expressed with pure-X stabilizers with all positive signs and pure-Z stabilizers (with possibly $\pm$ signs), since the $\beta g_z$ part of $g \in \bar{S}^Z$ is in $S^Z$.

Finally, we quickly comment on the relation between a sign-free stabilizer state and some ``simply signed'' states. 
Consider a type of stabilizer state which is stabilized by pure-X and pure-Z stabilizer, but possibly with different signs. 
It can be transformed to a sign-free stabilizer state by operating single qubit gates $U=\prod_j Z_j$ where $j$ is on the ``pivot-X'' qubits if it is the pure-$X$ generator with a minus sign.
That is to say, after the Gaussian elimination and back-substitution of the check matrix, if some pure-$X$ stabilizer generator has a minus sign, then we can operate $U_j=Z_j$ on the qubit where the ``pivot'' is located. 
Since the operations are only one qubit gates, the entanglement structure is preserved.

\section{Structure theorem for sign-free stabilizer states}\label{app:structure_thm}

In this appendix, we show that for any sign-free stabilizer state (CSS code) $|\psi\rangle$, there exist local unitaries $U_A$, $U_B$, $U_C$ that preserve the sign-free structure (that map strings of Pauli $X$ ($Z$) operators to strings of Pauli $X$ ($Z$) operators) such that  
\beq
    \label{eq:structural_thm_CSS)_app}
  U_A U_B U_C \ket{\psi}
    && =
    \ket{\mathrm{GHZ}}_{ABC}^{\otimes g_{ABC}} \otimes \ket{\mathrm{GHZ'}}_{ABC}^{\otimes g'_{ABC}} \\
    &&\otimes
    \ket{\mathrm{EPR}}_{AB}^{\otimes e_{AB}} \otimes
    \ket{\mathrm{EPR}}_{BC}^{\otimes e_{BC}} \otimes
    \ket{\mathrm{EPR}}_{CA}^{\otimes e_{CA}} \nonumber
    \\ 
    &&\otimes \ket{0}_A^{\otimes s_{A}}
    \otimes \ket{0}_B^{\otimes s_{B}}
    \otimes \ket{0}_C^{\otimes s_{C}} \otimes \ket{+}_A^{\otimes s'_{A}}
    \otimes \ket{+}_B^{\otimes s'_{B}}
    \otimes \ket{+}_C^{\otimes s'_{C}}, \nonumber
\eeq
where $|\mathrm{GHZ}\rangle = \frac{1}{\sqrt{2}}(|000\rangle + |111\rangle)$, $|\mathrm{EPR}\rangle = \frac{1}{\sqrt{2}}(|00\rangle + |11\rangle)$ and $|\mathrm{GHZ}'\rangle = \frac{1}{\sqrt{2}}(|+++\rangle + |---\rangle)$ and $\ket{\pm}$ eigenstates of $X$: $X\ket{\pm}=\pm \ket{\pm}$. We follow the proof of structure theorem for stablizer states in Ref.~\cite{Bravyi2006}.

\subsection{Preliminaries}

Recall that for a given quibit, the Pauli operators $X,Y,Z$ and Identity operator $I$ can be thought of as elements in a two-dimensional binary linear space $(\mathbb{F}_2)^2$, up to $\pm 1,\pm i$. Moreover, given $n$ qubits, strings of Pauli operators can be thought of as elements in a $2n$-dimensional binary linear space $G_n \equiv (\mathbb{F}_2)^{2n}$. The space $G_n$ has a symplectic structure from the commutation relation of operators, i.e. for two strings of Pauli operators $f, g\in G_n$ we have
\beq
f\cdot g = (-1)^{\omega(f, g)}g \cdot f,
\eeq
where $\omega: G_n \times G_n \rightarrow \mathbb{Z}_2$ is a symplectic form on $G_n$. In the cannonical basis consisting of single-qubit $X$ operators and single-qubit $Z$ operators, $\omega$ can be written as a block-off-diagonal matrix $\left(\begin{array}{cc}
     & I_n \\
     I_n &  \\
\end{array} \right)$.

For any subspace $S \subseteq G_n$, define the dual subspace $S^\perp$ as
\beq
S^\perp \equiv \{f \in G_n:~~~ \omega(f, g) = 0~~~\text{for all}~~g \in S\}.
\eeq
A subspace $S$ is called isotropic if $S \subseteq S^\perp$, i.e., $\omega(f, g) = 0$ for any $f, g \in S$. A subspace $S$ is called self-dual if $S^\perp = S$. For any isotropic (self-dual) subspace $S \subseteq G_n$ one has $\text{dim}(S) \leq n$ ($\text{dim}(S) = n$). 
The stabilizer group of a stabilizer state $|\psi\rangle$ is nothing but a self-dual subspace $S$.

A unitary operator $U$ belongs to the Clifford group, $U \in Cl(n)$, if it maps Pauli string operators to Pauli string operators (up to a sign) under the conjugation. In other words, $U \in Cl(n)$ if there exists a map $u : G_n \rightarrow G_n$ and a function $s: G_n \rightarrow \{+1, -1\}$, such that
\beq
U \cdot f \cdot U^\dagger = s(f) \cdot u(f) 
\eeq
for any $f \in G_n$.  Unitarity of $U$ implies that $u$ is a linear
invertible map preserving the symplectic form $\omega$, i.e.,
\beq
\omega(f, g) = \omega(u(f), u(g))
\eeq
for all $f, g \in G_n$. Such linear maps constitute a binary
symplectic group $Sp(2n, \mathbb{F}_2)$. Moreover, all $u \in Sp(2n, \mathbb{F}_2)$ can be realized through an appropriate choice of $U \in Cl(n)$.

In the context of sign-free stablizer states (CSS codes), we can similarly define the CSS-Clifford group 
\begin{definition}
A unitary operator $U$ belongs to the CSS-Clifford group, $U \in CSSCl(n)$, if it maps pure-$X$ ($Z$) Pauli string operators to pure-$X$ ($Z$) Pauli string operators (up to a sign) under the conjugation. We will sometimes refer to $X$/$Z$ the type of the operator. 
\end{definition}
Choose the canonical basis for $G_n$ in terms of single-qubit $X$ operators and single-qubit $Z$ operators, we see that the corresponding $u: G_n \rightarrow G_n$ of some $U\in CSSCl(n)$ should live in the $SL(n, \mathbb{F}_2)$ subgroup of $SP(2n, \mathbb{F}_2)$ consisting of elements in the form $\left(\begin{array}{cc}
     v &  \\
      & \left(v^{-1}\right)^T \\
\end{array} \right)$, where $v$ is an $n\times n$ binary matrix with unit (nonzero) determinant.

For a sign-free stabilizer state of $n$ qubits, we can choose a CSS basis for the stabilizer group $S$ to consist of pure-$X$ generators, which form a group $S^X$, as well as pure-$Z$ generators, which form a group $S^Z$. Two sign-free stabilizer states are connected to each other by CSS-Clifford unitaries iff their CSS basis have the same number of $Z$-type generators, denoted by $n_Z$, or the same number of $X$-type generators, denoted by $n_X = n-n_Z$. Recall that any stabilizer state can be represented by $|\psi\rangle = U |0\rangle^{\otimes n}$ for some operator $U \in Cl(n)$. Similarly, any sign-free stabilizer state can be represented by $|\psi\rangle = U |0\rangle^{\otimes n_Z}\otimes |+\rangle^{\otimes n_X}$ for some operator $U \in CSSCl(n)$.

\begin{definition}
$M$-partite sign-free stabilizer states $|\psi\rangle, |\psi'\rangle$ are called CSS-LCU-equivalent (where LCU stands for local Clifford unitary) if there exist CSS-Clifford unitaries $\{U_\alpha\in CSSCl(n_\alpha), \alpha\in M\}$, where $n_\alpha$ is the number of qubits for party $\alpha$, such that
$|\psi'\rangle = \left(\otimes_{\alpha\in M} U_\alpha\right) |\psi\rangle$.
\end{definition}

We wish to show that tripartite sign-free stabilizer states $|\psi\rangle$ are CSS-LCU-equivalent to the direct product of $|\mathrm{GHZ}\rangle$, $|\mathrm{GHZ}'\rangle$, $|\mathrm{EPR}\rangle$, $|0\rangle$ and $|+\rangle$. A related concept is the extraction of some (usually simple) state $|\psi'\rangle$ from $|\psi\rangle$.

\begin{definition}
Let $|\psi\rangle$ and $|\psi'\rangle$ be two $M$-partite (sign-free) stabilizer states of $n$ and $m$ qubits, respectively, such that in individual party $\alpha\in M$ the number of qubits $n_\alpha, m_\alpha$ for $|\psi\rangle, |\psi'\rangle$ satisfy $n_\alpha\geq m_\alpha$. The state $|\psi'\rangle$ is extractable from $|\psi\rangle$ if $|\psi\rangle$ is (CSS-)LCU-equivalent to $|\psi'\ra \otimes| \psi''\rangle$ with another $M$-partite stabilizer state $|\psi''\rangle$.
\end{definition}

We will present the proof by first extracting single-qubit states $|0\rangle, |+\rangle$, and then GHZ states $|\mathrm{GHZ}\rangle, |\mathrm{GHZ}'\rangle$, and showing that the rest is CSS-LCU-equivalent to EPR states $|\mathrm{EPR}\rangle$. But first we need a criterion for such extraction.

\subsection{Criterion for extraction}

Following Ref. \cite{Bravyi2006}, first we present the following fact from linear algebra.
\begin{lemma}\label{lemma: basic linear algebra}
Let $f_1,\dots, f_p$ and $f_1',\dots,f_p'$ be two families of vectors (operators) in $G_n$ satisfying the following conditions:
\beq
\omega(f_i , f_j) = \omega(f_i', f_j')~~~~\text{for any}~~1\leq i,j\leq p,
\eeq
\beq
\sum_i x_i f_i = 0 \quad\text{iff}\quad \sum_i x_i f_i' = 0~~~~\text{where $x_i=0$ or $1$.}
\eeq
and $f_i$ and $f_i'$ are of the same ($X$/$Z$)-type for any $1\leq i \leq p$. Then there exists $v \in SL(n, \mathbb{F}_2)$, an $n\times n$ binary matrix, such that $f_i' = v(f_i)$ for all pairs of $X$-type vectors $(f_i ,f_i')$ and $f_i' = (v^{-1})^T(f_i)$ for all pairs of $Z$-type vectors $(f_i, f_i')$.
\end{lemma}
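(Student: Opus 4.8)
The plan is to construct the required $v$ by hand, exploiting the CSS block structure. First I would use the decomposition $G_n = G_n^X \oplus G_n^Z$ into the spans of single-qubit $X$'s and $Z$'s, each identified with $\mathbb{F}_2^n$, under which $\omega$ vanishes on each summand and restricts to the standard bilinear pairing $\omega(x,z)=x\cdot z$ between them. I would partition the index set into $I_X$ (indices whose $f_i,f_i'$ are $X$-type, so $f_i,f_i'\in G_n^X$) and $I_Z$ (the $Z$-type indices). Since $X$-type and $Z$-type vectors lie in complementary subspaces, a combination $\sum_i x_i f_i$ vanishes iff its $X$-part and its $Z$-part vanish separately; hence the linear-dependence hypothesis decouples into matching dependences for $\{f_i\}_{i\in I_X}$ versus $\{f_i'\}_{i\in I_X}$ and for $\{f_i\}_{i\in I_Z}$ versus $\{f_i'\}_{i\in I_Z}$. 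Likewise the only nontrivial instances of $\omega(f_i,f_j)=\omega(f_i',f_j')$ are the cross terms with $i\in I_X$, $j\in I_Z$, which read $f_i\cdot f_j = f_i'\cdot f_j'$.

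Next I would reduce to bases. Using the matching-dependence property I can pass to a maximal independent subfamily on each side: pick $\{f_i\}_{i\in B_X}$ a basis of $F:=\mathrm{span}\{f_i\}_{i\in I_X}\subseteq G_n^X$, and $\{g_j\}_{j\in B_Z}$ a basis of $H:=\mathrm{span}\{f_j\}_{j\in I_Z}\subseteq G_n^Z$ (writing $g_j,g_j'$ for the $Z$-type data). The corresponding primed vectors are then bases of $F'$ and $H'$, and prescribing $v(f_i)=f_i'$ on a basis is consistent, since any relation among the $f_i$ is inherited by the $f_i'$. So it suffices to produce $v\in GL(n,\mathbb{F}_2)$ with $v|_F=\phi$ (the isomorphism $f_i\mapsto f_i'$) and $(v^{-1})^T|_H=\psi$ (the isomorphism $g_j\mapsto g_j'$); over $\mathbb{F}_2$ invertibility automatically gives $\det v=1$, so such a $v$ lies in $SL(n,\mathbb{F}_2)$ and the block map $\mathrm{diag}(v,(v^{-1})^T)$ is the required transformation.

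The heart of the argument is then a Witt-type extension. Writing $w:=(v^{-1})^T$, the constraint linking the two blocks is $\langle v x, w z\rangle=\langle x,z\rangle$ for all $x\in G_n^X$, $z\in G_n^Z$, i.e.\ the images of a fixed basis under $v$ and under $w$ must form mutually dual bases. Restricted to the prescribed data this becomes $\langle f_i', g_j'\rangle = \langle f_i, g_j\rangle$, which is exactly the surviving cross-pairing hypothesis; thus the partial prescriptions of $v$ on $F$ and of $w$ on $H$ are mutually compatible. I would then extend them to a full dual pair of bases: complete $\{f_i\}_{i\in B_X}$ to a basis of $G_n^X$ and $\{g_j\}_{j\in B_Z}$ to a basis of $G_n^Z$, and build $v$ together with the dual basis for $w$ one vector at a time, at each step choosing the new image vector to respect the already-fixed duality relations while remaining linearly independent of those chosen so far. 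This mirrors the symplectic-basis extension of Ref.~\cite{Bravyi2006}, specialized here to the block-diagonal CSS subgroup.

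The step I expect to be the main obstacle is exactly this simultaneous extension while preserving invertibility: $v$ is pinned on $F$ and its contragredient is pinned on $H$, and one must show the remaining free entries can be filled in so that $v$ is genuinely invertible rather than merely satisfying the pairing relations. The compatibility identity $\langle f_i', g_j'\rangle=\langle f_i,g_j\rangle$ removes the only possible clash between the two sets of constraints, and over $\mathbb{F}_2$ the inductive choice succeeds because at each stage the vectors to be avoided span a proper subspace. The care needed is that, unlike over a large field, one cannot invoke genericity and must verify directly that a valid extending vector always exists; the heaviest bookkeeping arises in the degenerate cases where $F$, $H$, and their duals overlap nontrivially.
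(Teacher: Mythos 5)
Your reduction is sound and matches the skeleton of what the paper needs: the linear-dependence hypothesis does decouple into the $X$ and $Z$ blocks, the only nontrivial instances of $\omega(f_i,f_j)=\omega(f_i',f_j')$ are indeed the cross pairings, the problem does reduce to finding an invertible $v$ with $v|_F=\phi$ and $(v^{-1})^T|_H=\psi$, and over $\mathbb{F}_2$ invertibility gives determinant one for free. The genuine gap is exactly at the step you flag, and the reason you offer for it --- ``at each stage the vectors to be avoided span a proper subspace'' --- is not a valid reason. When you choose the image of a new basis vector $p$, that image is not free to range over $G_n^X$: it must have prescribed pairings with every element of $\psi(H)$, so it is confined to a coset $q^0+W$ of the fixed subspace $W=\{y:\langle y,\psi(g)\rangle=0\ \text{for all }g\in H\}$. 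A coset of $W$ can be entirely contained in a proper subspace $S$ (precisely when $W\subseteq S$ and one point of the coset lies in $S$), so avoiding the span of previously chosen vectors is not automatic. Worse, the class of the new image modulo $F'+W$ is forced by the pairing constraints, not chosen, so no cleverness within the coset can repair a clash; one must prove the bad configuration never occurs, and that proof is the actual content of the lemma.

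The plan is completable, because the bad configuration is in fact impossible: writing $H^0=\{x:\langle x,g\rangle=0\ \text{for all }g\in H\}$, the compatibility identity gives $\phi(F\cap H^0)=F'\cap W$, so $\phi$ extends to an isomorphism $H^0\to W$; and for basis vectors outside $F+H^0$ the forced classes of their images in $G_n^X/(F'+W)$ are linearly independent (if $\sum_i q_i^0\in F'+W$, pairing against $\psi(H)$ forces $\sum_i p_i\in F+H^0$), so any choice of representatives yields an invertible $v$. Equivalently, in the greedy version one shows $W\subseteq S_{i-1}$ implies $H^0\subseteq F+\mathrm{span}\{p_j: j<i\}$, whence $q_i^0\in S_{i-1}$ would force $p_i$ to depend linearly on $F$ and the earlier $p_j$, a contradiction. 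Note that the paper's proof sidesteps this issue by a different organization: it runs a single Gram--Schmidt-type algorithm extending the unprimed family to a full canonical dual basis, runs the identical algorithm on the primed family (the hypotheses guarantee both runs produce the same coefficient matrices $F,\bar F$), and then invokes transitivity of the relevant group on canonical bases, so duality and invertibility are maintained jointly at every step rather than reconciled at the end.
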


\begin{proof}
Let us call a basis $e_1, \bar{e}_1, \dots , e_n, \bar{e}_n$ of the space $G_n$ canonical if $e_i, i=1,\dots,n$ are of $X$-type, $\bar{e}_i,i=1,\dots,n$ are of $Z$-type and the following condition holds
\beq
\omega(e_i, \bar{e}_j) = \delta_{ij}.
\eeq

We can extend the family $f_1,\dots, f_p$ to a canonical basis $\{e_i, \bar{e}_i\}$ using the Gram-Schmidt orthogonalization algorithm as follows. Demand $e_1$ or $\bar{e}_1$ to be $f_1$ depending on the type of $f_1$. Now suppose that the algorithm has processed $f_1,\dots,f_{i-1}$, and $e_{j_1},\dots,e_{j_s}$, $\bar{e}_{k_1},\dots,\bar{e}_{k_t}$ have been chosen. Here, $j_1,\dots,j_s$ and $k_1,\dots,k_t$ are not necessarily $1,\dots,s$ and $1,\dots,t$ for reasons that will be clear latter. We wish to find an algorithm to process $f_i$ and extend the canonical basis.

Without loss of generality, suppose $f_i$ is of $X$-type. If $f_i$ is a linear combination of $f_1,\dots,f_{i-1}$, then through induction we immediately see that $f_i$ can be written as linear combination of $e_{j_1},\dots,e_{j_s}$.
If not, consider all $l\in \{ k_1,\dots,k_t \}$ such that $\omega(f_i, \bar{e}_l)=1$ and call this set $O_i$. If for all $l\in O_i$, $e_l$ have been chosen, we demand that $e_{j_{s+1}} = f_i - \sum_{l\in O_i} e_l$,
where $j_{s+1}$ can be chosen to be the smallest number not in $j_1,\dots,j_s,k_1,\dots,k_t$, and we have
$
f_i=  \sum_{l\in O_i} e_l + e_{j_{s+1}}.
$
If for one $l_1\in O_i$, $e_{l_1}$ has not been chosen, then we demand that $e_{l_1} = f_i - \sum_{l\in O_i/l_1} e_l$. If for two or more of $l_1,\dots,l_r\in O_i$, $e_{l_1},\dots,e_{l_r}$ have not been chosen, we first make a basis transformation $\bar{e}_{l_2}, \dots, \bar{e}_{l_r} \rightarrow \bar{e}_{l_2} + \bar{e}_{l_1}, \dots, \bar{e}_{l_r} + \bar{e}_{l_1}$ and then demand that $e_{l_1} = f_i -  \sum_{l\in O_i/l_1,\dots,l_r} e_l$. In short, one can get a set of cannonical basis $\{e_i, \bar{e}_i\}$ and write 
\beq
f_j = \sum_{k=1}^n F_{jk} e_k\quad\quad \text{or} \quad\quad f_j = \sum_{k=1}^n \bar{F}_{jk} \bar{e}_k~,
\eeq
where $F$ and $\bar{F}$ are some binary $p \times n$ matrices. 

Clearly, if we apply the same algorithm in parallel to $f_1',\dots,f_p'$, we shall end up with another canonical basis $\{e_i', \bar{e}_i'\}$ such that $f_j' = \sum_{k=1}^n F_{jk} e_k'$ or $f_j' = \sum_{k=1}^n \bar{F}_{jk} \bar{e}_k'$ with the same $F$ and $\bar{F}$. The symplectic group $SL(n, \mathbb{F}_2)$ acts transitively on the set of canonical bases. Thus, 
\beq
e_i' = v(e_i) \quad\quad \text{and} \quad\quad \bar{e_i'} = (v^{-1})^T(e_i)
\eeq
for some $n\times n$ invertible binary matrix $v\in SL(n, \mathbb{F}_2)$. This implies that $f_j' = v(f_j)$ for all $X$-type vectors $f_j, f_j'$ and $f_j' = (v^{-1})^T(f_j)$ for all $Z$-type vectors $f_j, f_j'$.
\end{proof}

In order to state the criterion for extraction, we need an additional definition. 
For any vector $f \in G_n$ and party $\alpha$, denote the projection of $f$ onto the party $\alpha$ by $f_\alpha$.
\begin{definition}
Suppose $n$ qubits are distributed among $M$ parties. Let $S \subseteq G_n$ be a linear subspace. For
each $\alpha \in M$ define the local subspace $S_\alpha \subseteq S$ and the co-local subspace $S_{\hat \alpha} \subseteq S$ as
\beq
S_\alpha = \{f \in S :~~~ f_\beta = 0~~~\text{for all}~~\beta \in M/ \alpha\},
\eeq
and
\beq
S_{\hat\alpha} = \{f \in S :~~~ f_\alpha = 0\}.
\eeq
An $M$-party stabilizer state $|\psi\rangle$ with a stabilizer group $S$ is said to have full local ranks iff all local subgroups of $S$ are trivial: $S_\alpha = \emptyset$ for all $\alpha \in M$.
\end{definition}
In other words, $f \in S_{\hat \alpha}$ iff $f$ acts as the identity on party $\alpha$; $f \in S_\alpha$ iff $f$ acts as the identity on all parties $\beta\in M$ except $\alpha$. 

Following Ref.~\cite{Bravyi2006}, we can state the criterion for extraction.

\begin{comment}
\begin{lemma}
Suppose $n$ qubits are distributed among a set of parties $M$. Let $S, S' \subset G^n$ be linear subspaces. The
following statements are equivalent:
\begin{enumerate}
    \item There exist local operators $g_\alpha \in SL(n_\alpha,\mathbb{F}_2)$ such that
$S' = \left(\oplus_{\alpha\in M} g_\alpha\right)S$.
\item There exists a linear invertible map $T : S \rightarrow S'$,
such that
\begin{enumerate}
    \item $\omega(T(f)_\alpha, T(g)_\alpha) = \omega(f_\alpha, g_\alpha)$ for all $f, g \in S$ and $\alpha \in M$;
    \item $T\cdot S_{\hat{\alpha}} = S'_{\hat{\alpha}}$ for all $\alpha \in M$
    \item $T(f)$ and $f$ always have the same type.
\end{enumerate}
\end{enumerate} 
\end{lemma}

With the help of Lemma \ref{lemma: basic linear algebra}, this is nothing but a copy of a similar proof in \cite{Bravyi2006}.

\begin{lemma} Let $|\Psi\rangle$ and $|\Psi'\rangle$ be $M$-party stabilizer states with stabilizer groups $S$ and $S'$. The state $|\Psi'\rangle$ is extractable from $|\Psi\rangle$ iff there exists a linear injective map $T:S' \rightarrow S$ such that
\begin{enumerate}
\item $\omega(T(f)_\alpha, T(g)_\alpha) = \omega(f_\alpha, g_\alpha)$ for all $f, g \in S'$ and $\alpha \in M$;
\item $(T \cdot S')_{\bar{\alpha}} = T \cdot (S'_{\bar{\alpha}})$ for all $\alpha \in M$.
\item $T(f)$ and $f$ always have the same type.
\end{enumerate}
\end{lemma}
\end{comment}

\begin{lemma}\label{lemma: main lemma}
Let $|\psi\rangle$ and $|\psi'\rangle$ be $M$-party sign-free stabilizer states with stabilizer groups $S$ and $S'$, respectively. The state $|\psi'\rangle$ is extractable from $|\psi\rangle$ iff there exists a linear injective map $T:S' \rightarrow S$ such that
\begin{enumerate}
\item $\omega(T(f)_\alpha, T(g)_\alpha) = \omega(f_\alpha, g_\alpha)$ for all $f, g \in S'$ and $\alpha \in M$;
\item $(T \cdot S')_{\hat{\alpha}} = T \cdot (S'_{\hat{\alpha}})$ for all $\alpha \in M$.
\item $T(f)$ and $f$ always have the same ($X/Z$) type.
\end{enumerate}
\end{lemma}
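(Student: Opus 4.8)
The plan is to prove the two directions of the iff separately, mirroring the strategy of Ref.~\cite{Bravyi2006} but carrying the extra $X/Z$-type bookkeeping and invoking Lemma~\ref{lemma: basic linear algebra} as the central tool.

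For the forward direction (extractability $\Rightarrow$ existence of $T$), suppose $|\psi'\rangle$ is extractable, so that $|\psi\rangle$ is CSS-LCU-equivalent to $|\psi'\rangle\otimes|\psi''\rangle$ via local CSS-Clifford unitaries $U_\alpha$ realizing symplectic maps $v_\alpha$ of the block-diagonal $SL(n_\alpha,\mathbb{F}_2)$ form. Setting $v=\oplus_\alpha v_\alpha$, the composite map carries $S$ onto the stabilizer group of $|\psi'\rangle\otimes|\psi''\rangle$, which contains $S'$ as the subgroup supported away from the $|\psi''\rangle$ qubits. I would then define $T$ to be $v^{-1}$ restricted to (the image of) $S'$. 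Condition 1 holds because each $v_\alpha$ preserves $\omega$ on party $\alpha$; condition 3 holds because the block-diagonal $SL$ form maps $X$-type to $X$-type and $Z$-type to $Z$-type; and condition 2 holds because a local unitary commutes with the property ``acts as identity on party $\alpha$'', so it intertwines $S_{\hat\alpha}$ with $S'_{\hat\alpha}$.

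For the reverse direction (existence of $T$ $\Rightarrow$ extractability), which is the substance of the lemma, I would first pad $|\psi'\rangle$ with ancillary $|0\rangle$ and $|+\rangle$ qubits of the appropriate types so that each party carries the same number $n_\alpha$ of qubits as in $|\psi\rangle$; this lets the projections $f_\alpha$ and $T(f)_\alpha$ live in a common space $G_{n_\alpha}$, as required by Lemma~\ref{lemma: basic linear algebra}. Working party-by-party, I apply Lemma~\ref{lemma: basic linear algebra} to the two families $\{f_\alpha\}_{f\in S'}$ and $\{T(f)_\alpha\}_{f\in S'}$: the equal-symplectic-products hypothesis is exactly condition 1, the same-type hypothesis is condition 3, and the matching linear-dependence hypothesis is supplied by condition 2, since $\sum_i x_i f_i\in S'_{\hat\alpha}$ iff $T(\sum_i x_i f_i)\in (T S')_{\hat\alpha}=T(S'_{\hat\alpha})$ iff $\sum_i x_i T(f_i)\in S_{\hat\alpha}$. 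This produces local symplectic maps $v_\alpha\in SL(n_\alpha,\mathbb{F}_2)$ with $T(f)_\alpha=v_\alpha(f_\alpha)$ for $X$-type and $T(f)_\alpha=(v_\alpha^{-1})^T(f_\alpha)$ for $Z$-type. Assembling $v=\oplus_\alpha v_\alpha$ then yields a genuinely local map with $v(S')=T(S')\subseteq S$, which I realize by local CSS-Clifford unitaries; applying their inverse to $|\psi\rangle$ gives a state whose stabilizer group contains $S'$, and hence factorizes as $|\psi'\rangle\otimes|\psi''\rangle$, establishing extractability.

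I expect the main obstacles to be two bookkeeping points rather than a conceptual leap. First, verifying the linear-dependence hypothesis of Lemma~\ref{lemma: basic linear algebra} locally is precisely where the co-local condition 2 is needed, and the chain of equivalences above must be checked with care using injectivity of $T$ together with the identification of ``vanishing projection onto $\alpha$'' with membership in the co-local subspace. Second, I must track the sign-free property throughout: CSS-Clifford unitaries preserve $X/Z$ type but may introduce signs, so a final cleanup by single-qubit $Z$ (or $X$) gates---exactly as in Appendix~\ref{app:sign-free stablizer}---is needed to guarantee that the extracted $|\psi'\rangle$ and the remainder $|\psi''\rangle$ are sign-free, while such single-qubit gates leave the entanglement structure, and hence the extraction itself, intact.
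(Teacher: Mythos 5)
Your proposal is correct and takes essentially the paper's own route: the paper proves this lemma only implicitly, by deferring to the extraction criterion of Ref.~\cite{Bravyi2006} with the type-preserving Lemma~\ref{lemma: basic linear algebra} substituted for its untyped counterpart, and your padding, party-by-party application of that lemma (with condition 2 plus injectivity of $T$ supplying the matching linear dependences), realization of the block-form symplectic maps by CSS-Clifford unitaries, and final single-qubit Pauli sign cleanup is exactly that adaptation written out. The one detail to make explicit is that Lemma~\ref{lemma: basic linear algebra} should be applied to a generating family of $S'$ consisting of pure-$X$ and pure-$Z$ elements (which exists because $|\psi'\rangle$ is CSS), since generic elements of $S'$ have mixed type and the lemma's type hypothesis only applies to definite-type vectors.
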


\begin{comment}
\begin{lemma} \label{lemma: main lemma 2}
Let $|\psi\rangle$ and $|\psi'\rangle$ be $M$-party sign-free stabilizer states with stabilizer groups $S$ and $S'$. Suppose they have full rank. The state $|\psi'\rangle$ is extractable from $|\psi\rangle$ iff there exists a linear invertible map $T:S' \rightarrow S$ such that
\begin{enumerate}
\item $\omega(T(f)_\alpha, T(g)_\alpha) = \omega(f_\alpha, g_\alpha)$ for all $f, g \in S'$ and $\alpha \in M$;
\item $T(f)$ and $f$ always have the same type.
\end{enumerate}
\end{lemma}
\end{comment}

\subsection{Proof of structure theorem}
Let $|\psi\rangle$ be an $M$-party sign-free stabilizer state with a stabilizer group $S$. First, for $\alpha\in M$, if the subgroup $S_\alpha$ is nonzero and generated by $s_\alpha$ $Z$-type generators and $s_\alpha'$ $X$-type generators, from Lemma \ref{lemma: main lemma}, we see that $|0\rangle_\alpha^{\otimes s_\alpha} \otimes |+\rangle_\alpha^{\otimes s_\alpha'}$ can be extracted from $|\psi\rangle$ by CSS-LCU unitaries. After doing such extraction for every $\alpha\in M$, the rest state that we still call $|\psi\rangle$ will have full local ranks. 

Next, we consider extracting the following $m=|M|$-qubit GHZ states from $|\psi\rangle$, i.e.,
\beq
|\mathrm{GHZ}\rangle_M &&= \frac{1}{\sqrt{2}}\left(|00\dots 0\rangle + |11\dots 1\rangle\right)\\
|\mathrm{GHZ}'\rangle_M &&= \frac{1}{\sqrt{2}}\left(|++\dots +\rangle + |--\dots -\rangle\right)
\eeq
In order to count the number of $|\mathrm{GHZ}\rangle$ and $|\mathrm{GHZ}'\rangle$ that can be extracted from $|\psi\rangle$, we define the following two subgroups which are generated by their corresponding co-local subgroups,
\beq
S^X_{loc} = \sum_{\alpha\in M} S^X_{\hat \alpha} \quad\quad \text{and} \quad\quad S^Z_{loc} = \sum_{\alpha\in M} S^Z_{\hat \alpha}~,
\eeq
where $S^X_{\hat \alpha}$ and $S^Z_{\hat \alpha}$ are pure-$X$ and pure-$Z$ generators in $S_{\hat \alpha}$, respectively. 

\begin{theorem}
Let $|\psi\rangle$ be an $M$-party sign-free stabilizer state of $n$ qubits with a stabilizer group $S$, and the number of $Z$($X$)-type geneartors is $n_Z$($n_X$). Suppose that $m = |M| \geq 3$. The maximal number of states $|\mathrm{GHZ}\rangle$ and $|\mathrm{GHZ}'\rangle$ extractable from $|\psi\rangle$ by CSS-LCU unitaries are equal to $g_M=n_X - \mathrm{dim}\left(S^X_{loc}\right)$ and $g_M'=n_Z - \mathrm{dim}\left(S^Z_{loc}\right)$, respectively.
\end{theorem}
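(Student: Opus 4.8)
The plan is to mirror the GHZ-extraction argument of Ref.~\cite{Bravyi2006}, but carried out separately in the $X$- and $Z$-sectors and using the type-preserving extraction criterion of Lemma \ref{lemma: main lemma} in place of its non-CSS analogue. The key structural input is that for a CSS code the genuinely $m$-partite generator of $|\mathrm{GHZ}\rangle_M$ (namely $X^{\otimes m}$) is pure-$X$, whereas that of $|\mathrm{GHZ}'\rangle_M$ (namely $Z^{\otimes m}$) is pure-$Z$; the two GHZ species are therefore distinguished purely by the type of their multipartite generator, so that $g_M$ and $g_M'$ decouple into the $X$- and $Z$-sectors respectively.

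First I would show that $g_M = n_X - \dim(S^X_{loc}) = \dim(S^X/S^X_{loc})$ and $g_M' = \dim(S^Z/S^Z_{loc})$ are additive CSS-LCU invariants. Additivity over tensor products is immediate, since for $|\psi_1\rangle\otimes|\psi_2\rangle$ one has $S_{\hat\alpha} = S_{1,\hat\alpha}\oplus S_{2,\hat\alpha}$, so both $S^X_{loc}$ and the quotient split as direct sums. Invariance under a local $U_\alpha\in CSSCl(n_\alpha)$ holds because such a unitary preserves operator type and leaves every co-local subgroup $S_{\hat\beta}$ invariant (it cannot alter whether an operator acts trivially on a party $\beta\neq\alpha$, and it sends $X$-type to $X$-type), hence preserves $S^X_{loc}$ and its dimension.

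For the optimality (upper bound) direction I would evaluate these invariants on the canonical building blocks. A single $|\mathrm{GHZ}\rangle_M$ has $n_X=1$ with $S^X_{loc}$ trivial, and $n_Z=m-1$ with $S^Z_{loc}$ already equal to the full pure-$Z$ space (here $m\geq 3$ is essential, since every $Z_iZ_j$ is then co-local on some third party), giving $(g_M,g_M')=(1,0)$; by the $X\leftrightarrow Z$ symmetry a single $|\mathrm{GHZ}'\rangle_M$ gives $(0,1)$; an $|\mathrm{EPR}\rangle$ is bipartite, hence co-local on any third party and contributes $(0,0)$; and the single-qubit factors $\ket{0},\ket{+}$ have already been removed by the full-local-rank reduction. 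With additivity and invariance in hand, no state CSS-LCU-equivalent to $|\psi\rangle$ can contain more than $g_M$ copies of $|\mathrm{GHZ}\rangle$ or $g_M'$ of $|\mathrm{GHZ}'\rangle$.

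Finally, for achievability I would invoke Lemma \ref{lemma: main lemma} to construct a type-preserving injective map $T$ from the stabilizer group of $|\mathrm{GHZ}\rangle_M^{\otimes g_M}\otimes|\mathrm{GHZ}'\rangle_M^{\otimes g_M'}$ into $S$: take representatives of a basis of $S^X/S^X_{loc}$ as the $g_M$ multipartite pure-$X$ generators and representatives of a basis of $S^Z/S^Z_{loc}$ as the $g_M'$ multipartite pure-$Z$ generators, then complete each block with matching opposite-type generators drawn from $S$ so as to reproduce the within-party symplectic form $\omega(\cdot_\alpha,\cdot_\alpha)$ required by condition (1) of the lemma. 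I expect the main obstacle to be precisely this last step: verifying the co-local condition (2) for $T$, and then showing that after all GHZ and GHZ$'$ factors have been extracted the residual full-local-rank state satisfies $S^X=S^X_{loc}$ and $S^Z=S^Z_{loc}$ (that is, $g_M=g_M'=0$), from which a type-preserving version of the EPR-decomposition argument of Ref.~\cite{Bravyi2006} yields that the residue is CSS-LCU-equivalent to a collection of $|\mathrm{EPR}\rangle$ pairs. Throughout, the sign-free/CSS property is what guarantees the clean splitting $S=S^X\oplus S^Z$ that renders the two GHZ counts independent.
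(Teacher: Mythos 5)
Your overall architecture --- reduce to full local ranks, split the counting into the $X$- and $Z$-sectors, and invoke the type-preserving extraction criterion of Lemma~\ref{lemma: main lemma} --- matches the paper's. Your upper-bound half is actually cleaner than the paper's: promoting $g_M=\dim\left(S^X/S^X_{loc}\right)$ and $g_M'=\dim\left(S^Z/S^Z_{loc}\right)$ to additive, CSS-LCU-invariant, nonnegative quantities and evaluating them on the building blocks (with $|\mathrm{GHZ}\rangle$ giving $(1,0)$, $|\mathrm{GHZ}'\rangle$ giving $(0,1)$, $|\mathrm{EPR}\rangle$ giving $(0,0)$, and your observation that $m\geq 3$ makes every $Z_iZ_j$ co-local is exactly the right point) yields a self-contained proof of maximality, whereas the paper only remarks that conditions 2 and 3 of the extraction lemma imply it.

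The genuine gap is in the achievability half, precisely where you flagged it. Condition 2 of Lemma~\ref{lemma: main lemma} forces the image of each two-party generator $Z_\alpha Z_\beta$ of $|\mathrm{GHZ}\rangle$ to be a $Z$-type element of $S$ supported on $\alpha\cup\beta$ only, and condition 1 forces its restriction to party $\alpha$ to anticommute with the restriction of your chosen representative $\bar{f}^X_i$ while commuting with all the others; such elements cannot simply be ``drawn from $S$'' --- proving they exist is the crux of the theorem, and your proposal contains no mechanism for it. The paper's mechanism is: for each party $\alpha$ define $\mathcal{L}^Z_\alpha$, the space of \emph{local} $Z$-type Pauli operators on $\alpha$ commuting with $S^X_{loc}$ (these are in general not elements of $S$, since full local rank makes $S_\alpha$ trivial); prove that the bilinear pairing $\eta_\alpha(f,g)=\omega(f_\alpha,g_\alpha)$ between $\mathcal{L}^Z_\alpha$ and a complement $S^X_{ent}$ of $S^X_{loc}$ in $S^X$ is non-degenerate; choose the dual basis $f^Z_{\alpha i}$; and set $T(Z_\alpha Z_\beta)=f^Z_{\alpha i}+f^Z_{\beta i}$. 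Membership of this sum in $S$ is then established by checking that it commutes with every element of $S$ and invoking self-duality $S^\perp=S$. Without this pairing/non-degeneracy step and the self-duality argument, the injective map $T$ is not constructed and extraction is unproven. A secondary omission: to satisfy condition 1 \emph{between} the $|\mathrm{GHZ}\rangle$ and $|\mathrm{GHZ}'\rangle$ blocks one must also arrange $\omega\bigl((\bar{f}^X_i)_\alpha,(\bar{f}^Z_j)_\alpha\bigr)=0$, which the paper achieves by shifting the representatives $\bar{f}^X_i,\bar{f}^Z_j$ by elements of $S^X_{loc},S^Z_{loc}$; your ``complete each block'' step would need the same adjustment.
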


\begin{proof}
Note that $g_M,g_M'$ are invariant under extraction of local $|0\rangle_\alpha^{\otimes s_\alpha} \otimes |+\rangle_\alpha^{\otimes s_\alpha'}$ states. Thus we can safely assume that $|\psi\rangle$ has full local ranks. 

Define 
\beq
\mc{L}_\alpha^X = \{f\in G^X_\alpha:~~~ \omega(f, g) = 0~~~\text{for all}~~g\in S^Z_{loc}\}~
\eeq
and
\beq
\mc{L}_\alpha^Z = \{f\in G^Z_\alpha:~~~ \omega(f, g) = 0~~~\text{for all}~~g\in S^X_{loc}\}~,
\eeq
where $G_\alpha^X,G_\alpha^Z$ are $X,Z$-type generators acting only on party $\alpha$. Decompose $S^X$ and $S^Z$ as $S^X_{loc}\oplus S^X_{ent}$ and $S^Z_{loc}\oplus S^Z_{ent}$ with any $S^X_{ent}$ and $S^Z_{ent}$. Define two bilinear pairings of $\mc{L}_\alpha^Z, S^X_{ent}$ and $\mc{L}_\alpha^X, S^Z_{ent}$
\beq
\eta_\alpha: \mc{L}_\alpha^{Z/X}\otimes S^{X/Z}_{ent} \rightarrow \{0,1\},\quad \quad\quad \eta_\alpha(f,g) = \omega(f_\alpha, g_\alpha)
\eeq
We can prove that the two bilinear pairings are all non-degenerate, and in particular we have $\mathrm{dim}\left(\mc{L}_\alpha^{Z/X}\right) = \mathrm{dim}\left(S^{X/Z}_{ent}\right)$. 

To illustrate the usefulness of the above definition, recall that $|\mathrm{GHZ}\rangle$ is stablized by $\bar{\mathcal{f}}^X = \otimes_{\alpha\in M}X_\alpha$ and $\mathcal{f}_{\alpha\beta}^Z = Z_\alpha\otimes Z_\beta$. Then in this case $S^X_{ent}$ is generated by $\bar{\mathcal{f}}^X = \otimes_{\alpha\in M}X_\alpha$ and $\mc{L}_\alpha^Z$ are nothing but local $Z_\alpha$ operators. The same story goes for $|\mathrm{GHZ}'\rangle$. 

Note that $S_{ent}^X$ has dimension $g_M$ and $S_{ent}^Z$ has dimension $g_M'$. Choose an arbitrary basis $\bar{f}^X_i, i=1,\dots,g_M$ for the subspace $S^X_{ent}$ and $\bar{f}^Z_j, j=1,\dots,g_M'$ for the subspace $S^Z_{ent}$. For each $\alpha\in M$, choose the dual basis $f^Z_{\alpha i}, i=1,\dots,g_M$ for $\mc{L}_\alpha^{Z}$ and $f^X_{\alpha j}, j=1,\dots,g_M'$ for $\mc{L}_\alpha^{X}$, with respect to $\eta_\alpha$. 

Following Lemma \ref{lemma: main lemma}, We wish to extract $|\mathrm{GHZ}\rangle_M^{\otimes g_M}\otimes |\mathrm{GHZ}'\rangle_M^{\otimes g_M'}$ from $|\psi\rangle$ by constructing an appropriate linear injective map $T$ from its stabilizer group $S_{GHZ}$ to $S$, i.e., $T:S_{GHZ}\rightarrow S$. Choose the canonical (overcomplete) basis for $S_{GHZ}$ to be generated by $\bar{\mathcal{f}}^X_i= \otimes_{\alpha\in M}X_{\alpha i}$, $\mathcal{f}^Z_{\alpha\beta i} = Z_{\alpha i}\otimes Z_{\beta i}$, $i=1,\dots,g_M$ and $\bar{\mathcal{f}}^Z_j= \otimes_{\alpha\in M}Z_{\alpha j}$, $\mathcal{f}^X_{\alpha\beta j} = X_{\alpha j}\otimes X_{\beta j}$, $j=1,\dots,g_M'$. The action of $T$ on the generators should be
\beq
&& T\left(\bar{\mathcal{f}}_i^X\right)\rightarrow \bar{f}_i^X,
\quad\quad
T\left(\mathcal{f}_{\alpha \beta i}^Z\right)\rightarrow f_{\alpha i}^Z + f_{\beta i}^Z,
\quad\quad i=1,\dots,g_M\nonumber\\
&& T\left(\bar{\mathcal{f}}_j^Z\right)\rightarrow \bar{f}_j^Z, 
\quad\quad T\left(\mathcal{f}_{\alpha \beta j}^X\right)\rightarrow f_{\alpha i}^X + f_{\beta j}^X,\quad\quad j=1,\dots,g_M'~.
\eeq
For $T$ to be well-defined and to satisfy the conditions in Lemma~\ref{lemma: main lemma}, we need to prove  the following facts for any $i,j,\alpha,\beta$:
\begin{enumerate}
\item $f_{\alpha i}^Z + f_{\beta i}^Z, f_{\alpha j}^X + f_{\beta j}^X \in S$. This can be proven by noting that $f_{\alpha i}^Z + f_{\beta i}^Z, f_{\alpha j}^X + f_{\beta j}^X$ commute with all elements in $S$.
\item $\omega\left(f_{\alpha i}^Z, f_{\alpha j}^X\right) = 0$. Hence we have $\omega\left(f_{\alpha i}^Z + f_{\beta i}^Z, f_{\alpha j}^X + f_{\beta j}^X\right) = 0$.  
\item $\omega\left(\left(\bar{f}^X_i\right)_\alpha,\left(\bar{f}^Z_j\right)_\alpha\right) = 0$. This can be achieved by adjusting $S^X_{ent},S^Z_{ent}$ through adding to $\bar{f}^X_i, \bar{f}^Z_i$ elements in $S^X_{loc},S^Z_{loc}$ respectively.
\item  $\omega\left(f^Z_{\alpha i} , \bar{f}^X_{i'}\right) = \delta_{ii'}$, $\omega\left(f^X_{\alpha j} , \bar{f}^Z_{j'}\right) = \delta_{jj'}$. This is the definition of the dual basis.
\end{enumerate}
Then we see that all conditions in Lemma~\ref{lemma: main lemma} have been satisfied and therefore $|\mathrm{GHZ}\rangle_M^{\otimes g_M}\otimes |\mathrm{GHZ}'\rangle_M^{\otimes g_M'}$ can be extracted from $|\psi\rangle$. Moreover, the condition 2 and 3 in Lemma~\ref{lemma: main lemma} suggest that $g_M$ and $g_M'$ are the maximal number of $|\mathrm{GHZ}\rangle_M$ and $|\mathrm{GHZ}'\rangle_M$ that can be extracted from $|\psi\rangle$.

%After adjusting $S^X_{ent}$ and $S^Z_{ent}$ such that any two vectors from the two subspaces are commutative with respect to each other for any $\alpha\in M$, we get the following commutation relation
%\beq
%\omega\left(f^X_{\alpha j} , f^Z_{\alpha k}\right) = 0, \omega\left(\left(\bar{f}^X_j\right)_\alpha,\left(\bar{f}^Z_k\right)_\alpha\right) = 0, \omega\left(f^X_{\alpha j} , \bar{f}^Z_k\right) = \delta_{jk}
%\eeq
%for all $\alpha \in M$ and $j,k=1,\dots,p$.
\end{proof}

%with the help of Lemma~
%Finally, we need to adjust $S^X_{ent}$ and $S^Z_{ent}$ such that they are locally isotopic. It can be achieved by adding basis from $\mc{L}_\alpha^X$ and $\mc{L}_\alpha^Z$. Then we can construct the explicit map as in Lemma~\ref{lemma: main lemma} to prove the extractibility of the corresponding number of GHZ states.

After the extraction, the same proof of Theorem 5 in \cite{Bravyi2006} together with Lemma \ref{lemma: main lemma} can be applied to show that the rest is only EPR pairs. 
\begin{theorem}
Let $|\psi\rangle$ be a sign-free stabilizer state with full local ranks shared by three parties $M = \{A, B, C\}$ with stabilizer group $S$. After extracting tripartite GHZ states $|\mathrm{GHZ}\rangle_M^{\otimes g_M}\otimes |\mathrm{GHZ}'\rangle_M^{\otimes g_M'}$, the resulting state is CSS-LCU-equivalent to $  \ket{\mathrm{EPR}}_{AB}^{\otimes e_{AB}} \otimes
    \ket{\mathrm{EPR}}_{BC}^{\otimes e_{BC}} \otimes
    \ket{\mathrm{EPR}}_{CA}^{\otimes e_{CA}}$.
\end{theorem}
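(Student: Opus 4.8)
The plan is to adapt the proof of Theorem~5 in Ref.~\cite{Bravyi2006} to the CSS setting, applying the extraction criterion of Lemma~\ref{lemma: main lemma} with target state $|\psi'\rangle = \ket{\mathrm{EPR}}_{AB}^{\otimes e_{AB}} \otimes \ket{\mathrm{EPR}}_{BC}^{\otimes e_{BC}} \otimes \ket{\mathrm{EPR}}_{CA}^{\otimes e_{CA}}$. First I would observe that since the maximal number of tripartite GHZ states has already been extracted, the remaining state has $g_M = g_M' = 0$; by the previous theorem this forces $S^X = S^X_{loc}$ and $S^Z = S^Z_{loc}$, so the full stabilizer group is generated by its co-local subgroups, $S = S_{\hat A} + S_{\hat B} + S_{\hat C}$. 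Hence every generator can be taken bipartite, acting nontrivially on exactly two parties (it cannot be single-party, since full local ranks give $S_\alpha = \emptyset$). The vanishing of $g_M, g_M'$ is precisely what removes any GHZ-type linear relation among the three co-local pieces, so each pure-$X$ and pure-$Z$ generator is unambiguously assigned to one of the edges $AB$, $BC$, $CA$.

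Next I would organize the generators by edge. Fixing an edge, say $AB$ (the stabilizers in $S_{\hat C}$), the pure-$X$ strings and pure-$Z$ strings supported on $A\cup B$ pair against one another through the symplectic form $\omega$. The key claim is that under full local ranks this pairing between the $X$-type and $Z$-type generators on the edge is non-degenerate, which is the CSS analogue of the non-degeneracy of the pairings $\eta_\alpha$ used in the previous theorem. Non-degeneracy forces the numbers of pure-$X$ and pure-$Z$ generators on each edge to coincide, and this common number is the EPR count $e_{\alpha\beta}$; dual bases with respect to $\omega$ then play the role of the operators $X_\alpha X_\beta$ and $Z_\alpha Z_\beta$ stabilizing a single EPR pair.

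With these counts and bases in hand, I would define the injective linear map $T$ from the EPR stabilizer group $S'$ into $S$ by sending the canonical generators $X_\alpha X_\beta$ and $Z_\alpha Z_\beta$ of the $i$-th EPR pair on edge $\alpha\beta$ to the corresponding chosen $X$-type and $Z$-type edge generators. One then checks the three conditions of Lemma~\ref{lemma: main lemma}: type preservation holds by construction; the local symplectic condition $\omega(T(f)_\alpha, T(g)_\alpha) = \omega(f_\alpha, g_\alpha)$ reduces to the dual-basis relations on each edge; and the co-local condition $(T\cdot S')_{\hat\alpha} = T\cdot(S'_{\hat\alpha})$ holds because $T$ maps the edge-$\beta\gamma$ generators into $S_{\hat\alpha}$. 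Lemma~\ref{lemma: main lemma} then yields the CSS-local-Clifford unitaries realizing the equivalence, with Lemma~\ref{lemma: basic linear algebra} supplying the explicit $v\in SL(n,\mathbb{F}_2)$ on each party.

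The main obstacle I anticipate is establishing non-degeneracy of the edge pairing and, relatedly, verifying that $S_{\hat A}, S_{\hat B}, S_{\hat C}$ combine without residual overlap once the GHZ content is removed, i.e.\ that setting $g_M = g_M' = 0$ genuinely isolates a clean bipartite (EPR) sector on each edge. This is exactly where the sign-free refinement must be handled with care, since one must track $X$-type and $Z$-type pieces separately throughout rather than working with generic stabilizers; the type-respecting bookkeeping of dual bases is the technical heart, while the remainder parallels the bipartite argument of Ref.~\cite{Bravyi2006}.
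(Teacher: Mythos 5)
Your proposal is correct and is essentially the paper's own argument: the paper's entire proof of this theorem is the remark that the proof of Theorem 5 in Ref.~\cite{Bravyi2006} together with Lemma~\ref{lemma: main lemma} applies, which is exactly the type-respecting adaptation you outline. The non-degeneracy you flag as the main obstacle does go through: if $f \in S^X_{\hat{C}}$ pairs trivially with all of $S^Z_{\hat{C}}$, then, because cross-edge elements of $S^Z = S^Z_{loc}$ commute with $f$ on the single party where their supports overlap, the operator $f_A \otimes I_{BC}$ commutes with all of $S$ and so lies in $S$ by self-duality, hence in the local subgroup $S_A$, which is trivial by full local ranks --- this kills $f$ and simultaneously gives the directness of $S_{\hat{A}}+S_{\hat{B}}+S_{\hat{C}}$ and the injectivity of $T$.
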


\section{Alternative derivation of MIE bound for sign-free stabilizers}\label{app:alternative_derivation}

\begin{theorem}
For a sign-free stabilizer state (CSS code), $\MIE(A:B) \leq \MI(A:B)$.
\end{theorem}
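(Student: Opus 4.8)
The plan is to prove the bound by direct stabilizer counting, bypassing the structure theorem of Appendix~\ref{app:structure_thm}. I would begin from the standard fact that for a pure stabilizer state with group $S$, the entanglement entropy of a region $R$ (in bits) is $S_R = |R| - \dim \mathrm{Loc}_R(S)$, where $\mathrm{Loc}_R(S)$ is the subgroup of stabilizers supported entirely within $R$ and $\dim$ counts independent generators. For a sign-free (CSS) state the generators split into a pure-$X$ group $S^X$ and a pure-$Z$ group $S^Z$, and local subgroups split accordingly, $\dim\mathrm{Loc}_R(S) = \ell^X_R + \ell^Z_R$ with $\ell^{X/Z}_R \equiv \dim\mathrm{Loc}_R(S^{X/Z})$. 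Substituting into $\MI(A:B)=S_A+S_B-S_{AB}$ then gives a clean sector decomposition $\MI = \MI^X + \MI^Z$, where $\MI^{X/Z} = \ell^{X/Z}_{AB} - \ell^{X/Z}_A - \ell^{X/Z}_B$, and each term is nonnegative because $\mathrm{Loc}_A$ and $\mathrm{Loc}_B$ are independent subgroups of $\mathrm{Loc}_{AB}$.

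Next I would compute the post-measurement state on $AB$ obtained by measuring $C$ in the $Z$ basis. The key structural input is that, for a CSS state, every pure-$Z$ stabilizer commutes with the measurement and survives (its $C$-support collapses to an outcome-dependent sign), contributing its restriction $\pi_{AB}(S^Z)$; a pure-$X$ stabilizer commutes with all $Z_C$ only if it has no support on $C$, so only $\mathrm{Loc}_{AB}(S^X)$ survives. Crucially, the \emph{Pauli content} of the residual group $S^{\mathrm{res}} = \mathrm{Loc}_{AB}(S^X)\cdot\pi_{AB}(S^Z)$ is independent of the measurement outcome (only signs change), so $\MIE$ equals the $A{:}B$ entanglement of this single residual CSS state. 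I would then invoke the CSS duality $S^X = (S^Z)^\perp$ (as binary codes in $(\mathbb{F}_2)^{n}$, valid for a pure CSS state) together with the elementary code identity $\dim\pi_R(C) + \dim\mathrm{Loc}_R(C^\perp) = |R|$. This identity shows that $\pi_{AB}(S^Z)$ and $\mathrm{Loc}_{AB}(S^X)$ are dual codes on $AB$, so $S^{\mathrm{res}}$ is a genuine pure CSS state, and applying the entropy formula to it collapses all the $Z$-dependence, yielding $\MIE = \dim\pi_A(\mathrm{Loc}_{AB}(S^X)) - \ell^X_A = \ell^X_{AB} - \ell^X_A - \ell^X_B = \MI^X$.

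The bound then follows immediately: $\MIE = \MI^X \le \MI^X + \MI^Z = \MI$, since $\MI^Z \ge 0$. This also sharpens the main-text inequality into the equality $\MIE(A:B) = \MI^X(A:B)$, making precise the heuristic that $Z$-measurements cannot create the $X$-type entangling stabilizers responsible for post-measurement entanglement. I expect the main obstacle to be the careful bookkeeping in the second step: identifying exactly which stabilizers survive the measurement, verifying that $S^{\mathrm{res}}$ has full rank $|A|+|B|$ (equivalently that it is pure) and that its entropy is outcome-independent, and applying the projection/localization duality consistently to convert between $X$-support and $Z$-support counts without sign or rank errors. Everything else is linear algebra over $\mathbb{F}_2$.
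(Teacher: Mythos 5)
Your proof is correct, and it takes a genuinely different route from the paper. The paper proves this theorem (its Appendix C) by refining the Fattal et al.\ bipartite canonical form --- local generators on $A$, on $B$, plus entangled pairs, which for a CSS state pair a pure-$X$ string with a pure-$Z$ string --- then tracking which generators survive the $Z$-measurement of $C$ and closing with a two-case counting argument ($q\geq p$ versus $p>q$, the latter using anticommutation to force $S_A+S_B$ down by at least $p-q$); the main text gives a second, independent proof via a CSS-refined tripartite structure theorem that reduces everything to GHZ/GHZ$'$/EPR building blocks. You bypass both: your key tools are the local-subgroup entropy formula $S_R=|R|-\dim\mathrm{Loc}_R(S)$, the sector splitting $\dim\mathrm{Loc}_R(S)=\ell^X_R+\ell^Z_R$ (valid because the $X$- and $Z$-parts of a group element cannot cancel on any qubit, so a localized element has both factors localized), and the puncturing/shortening duality $(\pi_R(\mathcal{C}))^{\perp}=\mathrm{Loc}_R(\mathcal{C}^{\perp})$ applied with $S^Z=(S^X)^{\perp}$, which holds because the state is pure. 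All the steps you flag as bookkeeping do check out: the rank count $\dim\mathrm{Loc}_{AB}(S^X)+\dim\pi_{AB}(S^Z)=|AB|$ certifies that the residual group is the \emph{full} stabilizer group of the (pure) post-measurement state, the measurement outcome only flips signs of the $Z$-sector so the entanglement is outcome-independent, and rank--nullity gives $\dim\pi_A(\mathrm{Loc}_{AB}(S^X))=\ell^X_{AB}-\ell^X_B$. What your route buys is a strictly sharper statement: the identity $\MIE(A:B)=\MI^X(A:B)$ with deficit exactly $\MI^Z\geq 0$, which turns the paper's closing heuristic (``$Z$-measurements cannot generate $X$-type stabilizers'') into an equation; it also agrees with the paper's elementary examples (GHZ: $\MI^X=0$; $\mathrm{GHZ}'$: $\MI^X=1$; $\mathrm{EPR}_{AB}$: $\MI^X=1$) and with the percolation formulas $\MI=2|s_1|+|s_2|$, $\MIE_Z=|s_1|$, where indeed $\MI^X=|s_1|$. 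What the paper's arguments buy instead: Appendix C stays within the standard stabilizer canonical-form toolkit and needs no classical coding duality, while the structure theorem yields the full tripartite classification, which the paper reuses elsewhere (e.g., to read off $\MIE_X$ and $\MI$ simultaneously). The only polish your write-up needs is to state explicitly the no-cancellation lemma behind the sector splitting and the uniqueness of the $g=g_Xg_Z$ decomposition (both follow from $-I\notin S$ and the fact that $X$ and $Z$ multiply to $Y$, not $I$, on a single qubit).
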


\begin{proof} 
As shown in Ref.~\cite{fattal}, given a bi-partition of the stabilizer state $|\psi\rangle$ into $A$ and $B$, it is possible to choose some generators of stabilizers $S_{AB}$ so that they are split into three groups (by Gaussian elimination and back-substitution of the corresponding binary check-matrix):
\begin{enumerate}
    \item $G_{A}^{(i)}\otimes I_B$ that acts exclusively on $A$, where $I_B$ acts on $B$ as identity.
    \item $I_{A}\otimes G_{B}^{(j)}$ that acts exclusively on $B$, where $I_A$ acts on $A$ as identity.
    \item $n$ pairs of generators $H_{A}^{(k)}\otimes H_{B}^{(k)}$ and $\bar{H}_{A}^{(k)}\otimes \bar{H}_{B}^{(k)}$, where $H_{A}^{(k)}$ and $H_{B}^{(k)}$ anticommute with $\bar{H}_{A}^{(k)}$ and $\bar{H}_{B}^{(k)}$, respectively, but commute with all the other generators of $S_{AB}$, including $H_{A}^{(k)}$ and $H_{B}^{(k)}$ with a different $k$ corresponding to other pairs. 
\end{enumerate}

Note that since generators of $S$ of the sign-free stabilizer state can be chosen to consist of pure-$X$ Pauli strings and pure-$Z$ Pauli strings with exactly the same algorithm (namely the Gaussian elimination and back-substitution of the corresponding binary check-matrix), we assume the generators in the groups above are also either pure-$X$ Pauli strings or pure-$Z$ Pauli strings. 
Now in the third group, for every pure-$X$ stabilizer generator, its corresponding partner of the pair has to be a pure-$Z$ stabilizer so that the anti-commutation property can be satisfied.
We denote the pure-$X$ stabilizer and pure-$Z$ stabilizer of pair $k$ in the third group by $H_{X,A}^{(k)}\otimes H_{X,B}^{(k)}$ and $H_{Z,A}^{(k)}\otimes H_{Z,B}^{(k)}$, respectively, and the number of pairs $n$ is equal to the entanglement entropy of $A$ or $B$, i.e., $n = E_{A} = E_B$. 
%\footnote{Following the derivation in Ref. \cite{fattal}, we can see that after choosing generators to consist of pure-$X$ stabilizers and pure $Z$ stabilizers, the above partition is still possible.} 

Now consider partitioning the stabilizer state $|\psi\rangle$ into three parts A, B and C. 
Again, the generators of $S_{ABC}$ can be split into three groups, $G_{AB}^{(i)}\otimes I_C$, $I_{AB}\otimes G_{C}^{(j)}$ and $p$ pairs of $H_{X,AB}^{(k)}\otimes H_{X,C}^{(k)}$ and $H_{Z,AB}^{(k)}\otimes H_{Z,C}^{(k)}$, where $p= E_{AB}$.
After measuring $C$ in the $Z$-basis, $AB$ disentangles with $C$ and the resulting state $|\psi_{AB}(C)\rangle$ 
%if we focus on the region $A$ 
will be stabilized by the generators $G_{AB}^{(i)}$ and $H_{Z,AB}^{(k)}$. 
The generators of this stabilizer again can be chosen to consist of three groups, $\tilde G_{A}^{(\tilde i)}\otimes I_B$, $I_{A}\otimes \tilde G_{B}^{(\tilde j)}$ and $q$ pairs $\tilde H_{X,A}^{(\tilde k)}\otimes \tilde H_{X,B}^{(\tilde k)}$ and $\tilde H_{Z,A}^{(\tilde k)}\otimes \tilde H_{Z,B}^{(\tilde k)}$, where $q=\MIE(A,B)$ and also equals $E_A$ or $E_B$ after measurement. 

Obviously, $E_A\geq q$ and $E_B\geq q$. 
If $q\geq p$, then immediately we have $\MI(A,B) = E_A + E_B - E_{AB}\geq 2q-p \geq q = \MIE(A,B)$. 
If $p > q$, then at least $p-q$ number of $H_{Z,AB}^{(k)}$ act purely on $A$ or $B$. 
Suppose one of such $H_{Z,AB}^{(k)}$ acts purely on $A$. 
Then since $H_{X,AB}^{(k)}$ anticommutes with $H_{Z,AB}^{(k)}$, the restriction of $H_{X,AB}^{(k)}$ to $A$ cannot be written as linear combination of elements in $\tilde G_{A}^{(i)}$ and $\tilde H_{Z,A}^{(\tilde k)}$ which all commute with $H_{Z,AB}^{(k)}$. 
Moreover, consider all $H_{X,AB}^{(k)}$ corresponding to $H_{Z,AB}^{(k)}$ that acts purely on $A$. For exactly the same reason, any such element, when restricted to $A$, cannot be written as linear combination of other restriction of $H_{X,AB}^{(k)}$ to $A$, i.e., $\tilde G_{A}^{(i)}$ and $\tilde H_{Z,A}^{(\tilde k)}$. Hence, $E_A+E_B$ after measurement should be decreased by at least $p-q$, i.e. $E_A + E_B - 2q \geq p-q$. 
Therefore, we again establish that $\MI(A,B)\geq \MIE(A:B)$.
\end{proof}

\section{MIE in critical three-state Potts model}
\label{app:threestatePotts}
Here we consider the three state Potts model, which has local Hilbert space dimension $3$ and is thus beyond the scope of our bound in the main text.
The three-state Potts model is defined by 
\begin{equation}
    H_{\text{Potts}} = -\sum_{j} (U_{j} U^{\dagger}_{j+1}+U^{\dagger}_j U_{j+1}) - \sum_{j} (V_j+V^{\dagger}_j),
\end{equation}
where the operators $U$ and $V$ are the analog of spin operators $Z$ and $X$ in the Ising model, 
\begin{eqnarray}
    U &=& \sum_{n=1}^{3} e^{in\frac{2\pi}{3}} |n\rangle\langle n| \\
    V &=& \sum_{n=1}^{3} |n+1\rangle\langle n|~.
\end{eqnarray}
The model has a global $\mathbb{Z}_3$ symmetry generated by $\prod_{j} V_j$. There are two possible choices of measurement basis as the local Hilbert space dimension is $3$. The first choice, which we denote by $U$, is the measurement in the computational basis,
\begin{equation}
    P_n = |n\rangle \langle n|. ~~~ (\text{basis U}).
\end{equation}
%The second choice, which we denote by $U2$, is still measurement in the computational basis, but the measurement does not distinguish two of the three basis states,
%\begin{equation}
%    P_1 = |1\rangle\langle 1|,~~ P_2 = I-P_1, ~~~(\text{basis U2}).
%\end{equation}
The second choice is measurement in the eigenbasis of $V$, that is
\begin{equation}
    |\tilde{k}\rangle = \sum_{n=1}^{3} e^{ink \frac{2\pi}{3}}|n\rangle,
\end{equation}
and the projectors are
%The third choice, as we denote by $V$, is the fully-distinguishing measurement in this basis, with projectors
\begin{equation}
    P_k = |\tilde{k}\rangle\langle \tilde{k}|, ~~~(\text{basis V}).
\end{equation}
%Similarly, the last choice, which we denote by $V2$, has projectors
%\begin{equation}
%    P_1 = |\tilde{1}\rangle\langle \tilde{1}|,~~ P_2 = I-P_1, ~~~(\text{basis V2}).
%\end{equation}
As we numerically observe in Figure \ref{fig:threePotts}, %the MIEs in bases $U2$ and $V2$ approach a constant at long distances, likely due to the degeneracies in the measurement bases. 
the MIEs in the basis $U$ and $V$ decay polynomially with the distance, similar to the Ising and the tri-critical Ising models. The MIE in $V$ basis decays as $x^{-0.27}$, where the exponent is roughly half of the MI exponent, and the MIE in $U$ basis decays as $x^{-3.28}$ with a much larger exponent.  Again, the MIE in $V$ basis appears to track the slowest decaying correlation function $\la U^{\dagger}(x) U(0) \ra\sim x^{-4/15}$ \cite{belavin_infinite_1984}, with the exponent $4/15\approx 0.267$.
\begin{figure}
    \centering
    \includegraphics[width=0.6\columnwidth]{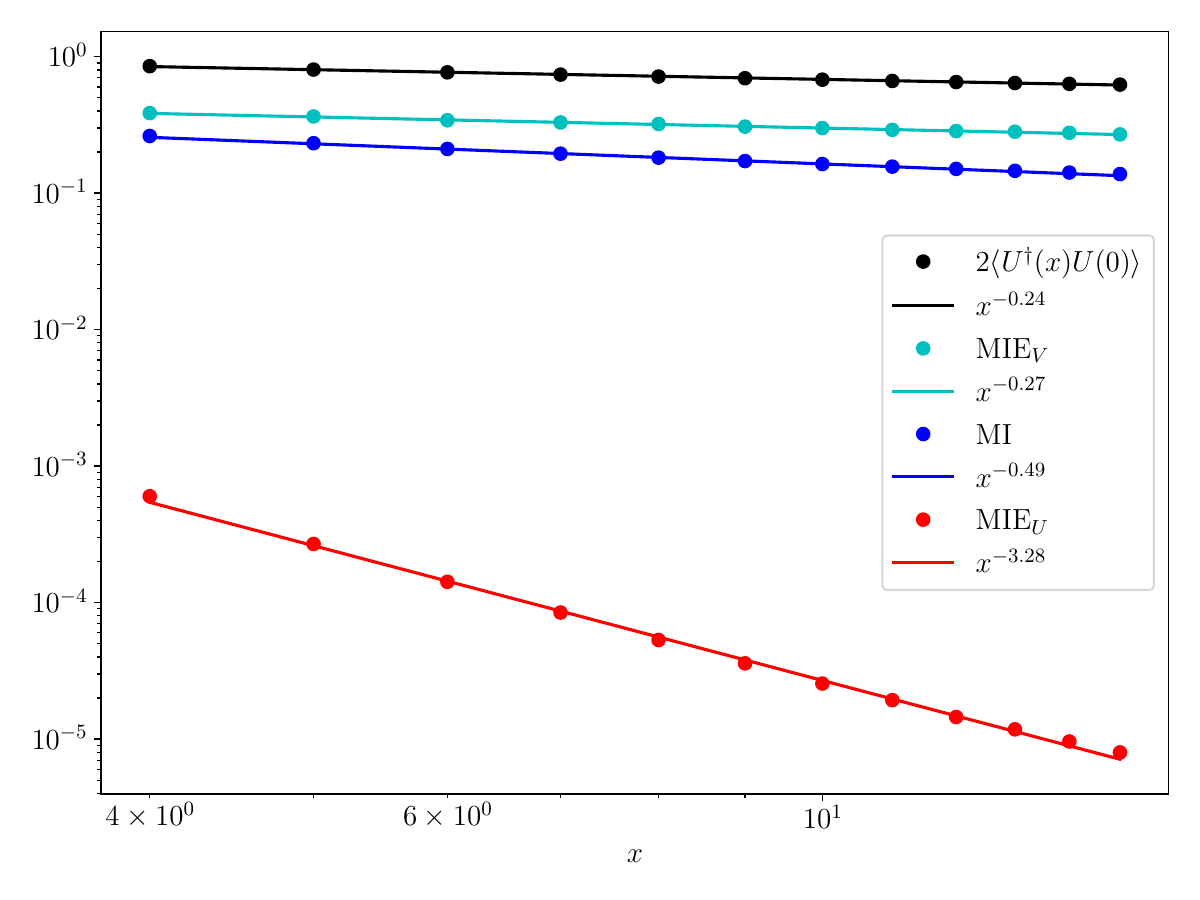}
    \caption{MI, MIE and the relevant correlations in the three-state Potts model $H_{\text{Potts}}$ with the system size $L=48$.}
    \label{fig:threePotts}
\end{figure}

\section{Sampling MIE using Matrix Product States}\label{app:sampling}

In order to reduce the sampling error, we both sample over different time slices of the steady states and different measurement outcomes. 
For Haar-MIPT, denote the number of steady states sampled be $N$ and the number of measurement outcomes sampled be $M$, then the total number of samples is $NM$. 
In the actual numerical simulation, we choose $N=200$ and $M=40000$, resulting in $8$ million samples. 
Thus the statistical error is on the order of $1/\sqrt{NM}$, which is less than $10^{-3}$.

In addition, we improve the algorithm that samples measurement outcomes to reduce the auto-correlations significantly. Previous algorithm \cite{Popp2005} uses Markov chain Monte Carlo (MCMC) over the collection of measurement outcomes.
Here we eliminate the auto-correlations completely by performing measurement site by site. 
Since measurements on different sites commute with each other, the order of measurements does not affect the probability distribution of the measurement outcomes. 
We start with a MPS in the left canonical form and measure the rightmost site, then a projector is applied according to the Born's rule. 
We then shift the canonical center to the site that is left to the rightmost site and apply the measurement to that site. 
The procedure is repeated until all spins, except the spins in $A$ and $B$ which we wish to compute the MIE, are measured. 
The MIE is simply the entanglement between $A$ and $B$ in the final state and thus we obtain a sample of MIE. 
We then start over and the sampling process is applied to the same state for $M$ times. One of the merits of the the sampling process over MCMC is that it does not have any auto-correlation between samples (that is, a new sample does not have any memory of the past samples). 
Furthermore, the MPS is kept in canonical form after each measurement, which significantly reduces the cost of computing the Born probabilities.

\section{The probability distribution of measurement induced entanglement}
\label{app:distribution}
\begin{figure}
    \centering
    \includegraphics[width=0.6\columnwidth]{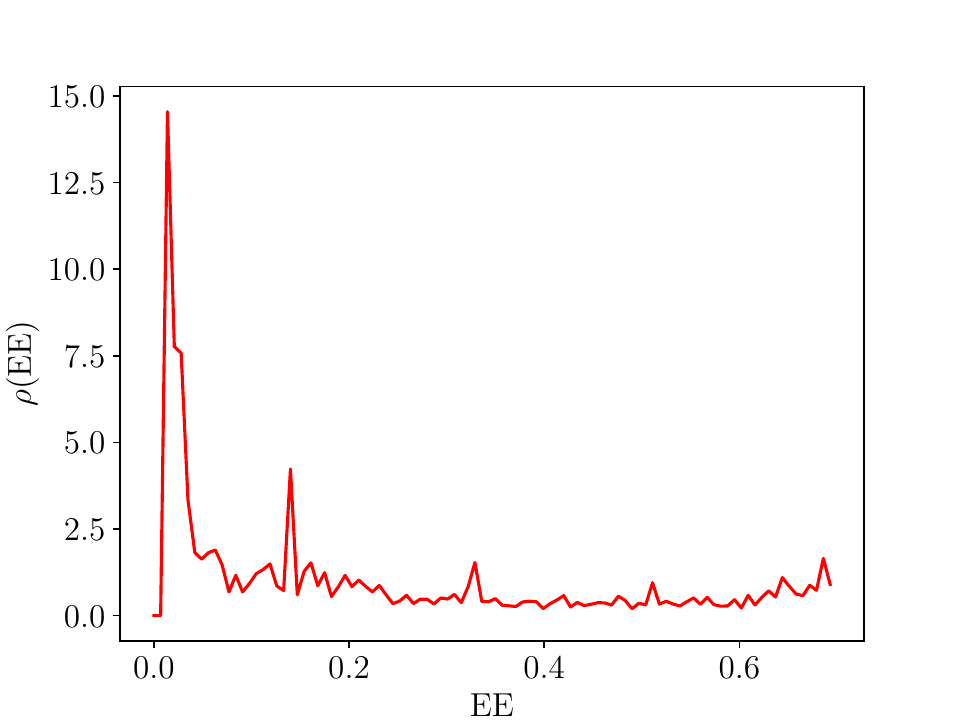}
    \caption{Probability density of entanglement entropies (EE) after measuring all but two qubits $A,B$ with separation $x=10$ in $Z$ basis for the critical transverse field Ising model $H_{\mathrm{Ising}}$ with system size $L=20$. Note that MIE (the average entanglement entropy over all measurement outcomes) is $\MIE = \int \rho(\mathrm{EE})~d\mathrm{EE} = 0.21$ here.}
    \label{fig:distribution}
\end{figure}

In this appendix, we show in Fig.~\ref{fig:distribution} the probability density of entanglement entropies between qubits $A,B$ with separation $x=10$ after measuring the complement in the $Z$-basis.  We use the ground state of the critical transverse field Ising with $L=20$ for illustration.
As we see from the distribution, the large-valued measurement induced entanglement entropies have smaller probabilities to occur, while the small-valued ones have higher probabilities.
This suggests that the behavior of MIE we observe in the main text is an averaged behavior and is not dominated by the behaviors from few special measurement outcomes.

\section{MIE, MI, and Correlations of 1d gapless SPT}\label{app:gSPTresults}
In this appendix, we give the derivation for the various results of the MI and MIE scaling in the model $H_{gSPT}$.
For reader's convenience and to be self-contained, we repeat the construction of the ground state wavefunction shown in Ref.~\cite{scaffidi}.
The Hamiltonian $H_{gSPT}$ can be related to the Hamiltonian
\beq
    H_{\text{trivial}} =  
    -\sum_j(\sigma^z_j+\tau^z_{j-\frac{1}{2}})-\sum_j \sigma^x_j\sigma^x_{j-1}~,
\eeq
by $H_{gSPT} = U H_{\text{trivial}} U^{\dagger}$, where
\beq
    U=\prod_{j=1}^{N}\text{CX}_{j,j+\frac{1}{2}}\text{CX}_{j+\frac{1}{2},j+1}~
\eeq
and $\text{CX}_{ij}$ is the controlled-$X$ gate operate on the spin pair $i$ and $j$, where the control is based on the $X$-basis of the controlled qubit.
In other words,
\beq
 \text{CX}_{ij}=|+\ra\la +|_i X_j + |-\ra\la -|_i~.
\eeq
The ground state of $H_{gSPT}$ can therefore be constructed as $|\psi\ra = U (|\psi_{\text{Ising}}\ra_{\sigma} \otimes |\pmb{1}\ra_{\tau} )$, where $|\psi_{\text{Ising}}\ra_{\sigma}$ is the critical ground state of the Ising model on the $\sigma$ spins and $|\pmb{1}\ra_{\tau} = \otimes_{j}|1\ra_{j+\frac{1}{2}}$ on the $\tau$ spins.

Some of the correlation functions and the mutual information can therefore be related to the ones in the Ising model.
In particular, since $X_j$ commutes with $U$, we have $\la \psi| \sigma_A^x \sigma_B^x |\psi \ra = \la  X_A X_B \ra_{\text{Ising}} \sim \ell^{-1/4}$ when $A, B$ are on $\sigma$, and $\la \psi| X_A X_B |\psi \ra = 0$ when any of $A$ or $B$ is on $\tau$. 
We can also obtain $\la \psi|\tau^z_A \tau^z_B |\psi\ra$ as the following. 
Since $\sigma^x_{j}\tau^z_{j+\frac{1}{2}}\sigma^x_{j+1}=1$ for the ground state, we have $\la \psi|\tau^z_A \tau^z_B |\psi\ra = \la \psi| \sigma^x_{A-\frac{1}{2}}\sigma^x_{A+\frac{1}{2}}\sigma^x_{B-\frac{1}{2}}\sigma^x_{B+\frac{1}{2}}|\psi \ra \sim \la Z_A Z_B\ra_{\text{Ising}} \sim \ell^{-2}$,
where we have used the self-dual property in the quantum Ising model.

The scaling of the mutual information can also be related to the above correlation functions.
For A and B both on $\sigma$, we have the mutual information $\text{MI} \sim \la  X_A X_B \ra_{\text{Ising}}^2 \sim \ell^{-1/2}$; while if A and B are both on $\tau$, $\text{MI} \sim \la  Z_A Z_B \ra_{\text{Ising}}^2 \sim \ell^{-4}$. 
Finally, $\text{MI} =0 $ if $A$ and $B$ are of different type of the spins. 
We derive these results in details in the following.

The density matrix is $\rho = U (\Omega \otimes |\pmb{+}\ra\la\pmb{+}|) U^{\dagger}$, where $\Omega = |\psi_{\text{Ising}}\ra \la \psi_{\text{Ising}}| $ is the density matrix of the Ising ground state on the $\sigma$ spins.
First, consider both A and B on the $\sigma$ spins. 
The reduced density matrix on A(B) is 
\beq
    \rho_{A(B)}=\frac{1}{2}(\Omega_{A(B)} + X_{A(B)}\Omega_{A(B)}X_{A(B)})~,
\eeq
where $\Omega_{I}=\tr_{I^{c}}[\Omega]$ is the reduced density matrix of $\Omega$ on the region $I$.
The reduced density matrix on $A$ or on $B$ is already diagonal in the $x$-basis, it is easy to obtain that $S_A = S_B = \ln 2$ from the Ising symmetry.
On the other hand, the reduced density matrix on $A \cup B$ can be obtained similarly
\beq
    \rho_{AB}&=&\frac{1}{4}(\Omega_{AB} + X_{A}\Omega_{AB}X_{A}+X_{B}\Omega_{AB}X_{B}
    +X_{A}X_{B}\Omega_{AB}X_{A}X_{B})~,
\eeq
which is again already diagonal in the $x$-basis.
Its eigenvalues are therefore $\frac{1}{4}\la (1+sX_A)(1+rX_B)  \ra_{\text{Ising}}$ with $s$ and $r=\pm$. 
Accordingly, the entanglement entropy is
\beq
    S_{AB}&=&\ln 4 - \frac{1}{4}\sum_{s,r=\pm}\la(1+sX_A)(1+rX_B)\ra_{\text{Ising}} \ln [\la(1+sX_A)(1+rX_B)\ra_{\text{Ising}}] \\ \notag
    &\approx& \ln 4 -\frac{1}{2}[(1+\la X_AX_B\ra_{\text{Ising}})\la X_A X_B \ra_{\text{Ising}} - (1-\la X_AX_B\ra_{\text{Ising}})\la X_A X_B \ra_{\text{Ising}}] = \ln 4 - \la X_A X_B\ra^2_{\text{Ising}}~,
\eeq
where we have assumed large separation of $A,B$ and used $\ln (1+x) \approx x$ when $|x|\ll 1$.
We therefore have $\MI \sim \la X_A X_B\ra^2_{\text{Ising}} \sim \ell^{-1/2}$ if $A$ and $B$ are both on $\sigma$.
On the other hand, if $A$ and $B$ are both on $\tau$, we have 
\beq
    \rho_A = \frac{1}{2}[|-\ra \la -|+\la \mu_{A} \ra_{\text{Ising}} (|-\ra \la +|+|+\ra\la-|) +|+\ra\la+|]~,
\eeq
where we have abbreviated $\mu_A \equiv X_{A-\frac{1}{2}}X_{A+\frac{1}{2}}$.
One can therefore easily get 
\beq
    S_A = \ln 2 - \sum_{s=\pm}\frac{1}{2}[(1+s\la \mu_A \ra_{\text{Ising}})\ln (1+s\la \mu_A \ra_{\text{Ising}})]~,
\eeq
and similarly for $\rho_B$ and $S_B$.
We can also obtain $\rho_{AB}$ similarly when $A$ and $B$ are both on $\tau$ and far apart in the matrix form as 
\beq
    \rho_{AB}&=&\frac{1}{4}
    \begin{pmatrix}
    1 & \la \mu_A \ra & \la \mu_B \ra & \la \mu_A \mu_B\ra \\
    \la \mu_A \ra & 1 & \la \mu_A \mu_B\ra & \la \mu_B \ra \\
    \la \mu_B \ra & \la \mu_A \mu_B\ra & 1 & \la \mu_A \ra \\
    \la \mu_A \mu_B\ra & \la \mu_B \ra & \la \mu_A \ra & 1
    \end{pmatrix}~,
\eeq
where all of the above expectation values are evaluated with respect to the $|\psi_{\text{Ising}}\ra$ state.
It is easy to check that the four eigenvalues of $\rho_{AB}$ are $\frac{1}{4}(1+ s\la \mu_{A} \ra +r \la \mu_{B} \ra + rs \la \mu_A \mu_B\ra)$ where $s$, $r=\pm$.
Consider the long-distance limit $\la \mu_A \mu_B\ra \sim c\ell^{-2} + \la \mu_{A} \ra\la \mu_{B} \ra $, it is easy to check that $\MI \rightarrow 0$ when $\ell \rightarrow \infty$.
The leading scaling behavior of $\MI$ can then be obtained by expanding $\MI$ order by order in $c\ell^{-2}$, and we obtain $\MI \sim \ell^{-4}$.
Finally, when $A$ is on $\sigma$ and $B$ is on $\tau$, we have $\rho_{AB}=\rho_{A,\text{Ising}}\otimes \frac{1}{2} I_B$ where $\rho_{A,\text{Ising}}=\tr_{A^c}[|\psi_{\text{Ising}}\ra\la\psi_{\text{Ising}} |]$ and $I_B$ is the identity matrix on site B. 
We therefore have $\MI(A:B)=0$ when A and B are on different type of the spins.

Some of the MIE's can also be related to the ones obtained in the Ising model. 
In particular, since the measurement in the $X$-basis commute with $U$, the results of $\MIE_X$ given in the main text can be obtained easily related to the MIE's obtained in the Ising model. 
To obtain $\MIE_Z$, let us consider running the circuit in ``backwards''. 
In particular, since we are interested in the quantity $\psi(\{a_i,b_j\})\equiv \la \{\sigma_i^z = a_i \},\{\tau_{j}^z = b_j\}|U |\psi_{\text{Ising}}\ra\otimes |\pmb{1}\ra $, where $i$ or $j$ runs through all the sites (including A and B), we consider $U |\{\sigma_i^x = a_i \},\{\tau_{j}^x = b_j\} \ra$ first.
This state is a stabilizer state stabilized by the generators $(-1)^{a_i}\tau^x_{i-\frac{1}{2}}\sigma^z_i\tau^x_{i+\frac{1}{2}}$ and $(-1)^{b_j}\sigma^x_{j-\frac{1}{2}}\tau^z_j\sigma^x_{j+\frac{1}{2}}$.
To calculating $\psi(\{a_i,b_j\})$, let us first consider it overlapping with the $\la \pmb{1}|$ state on the $\tau$ spins.
This can in fact be viewed as measuring $U |\{\sigma_i^x = a_i \},\{\tau_{j}^x = b_j\} \ra$ in the $Z$-basis on the $\tau$ spins with the $|1\ra$ outcomes.
Such a state can again be described by a stabilizer state stabilized by the generators $\{\tau^z_j, (-1)^{b_j}\sigma^x_{j-\frac{1}{2}}\sigma^x_{j+\frac{1}{2}},(-1)^{\sum_i a_i}\prod_i\sigma^z_i  \}$, where again $i$ and $j$ run through all the sites including $A$ and $B$.
We therefore see that this resulting state is a GHZ-type state on the $\sigma$ spins.
To obtain $\psi(\{a_i,b_j\})$, we then consider its overlap with $|\psi_{\text{Ising}}\ra$. 
Note that, since $|\psi_{\text{Ising}}\ra$ is in the $Z_2$ parity plus sector (recall $U_{Z_2} \equiv \prod_j\sigma^z_j$), only outcomes with $\sum_i a_i$ being even will have nonzero overlap.
Furthermore, we see that once we consider only the outcome being in the $\sum_i a_i$ even sector, the resulting wavefunction will only depend on $\{ b_j\}$ and will be independent of $\{a_i\}$.

Now considering $A$ and $B$ are both on $\sigma$, the probability of getting an outcome $C$ is $P_C = \sum_{a_A,a_B =0,1}|\psi(\{a_i,b_j\})|^2$ with the resulting wavefunction being $|\psi_C\ra = \sum_{a_A,a_B =0,1}\chi(\{a_i,b_j\})|a_A,a_B\ra$, where $\chi(\{a_i,b_j\})=\psi(\{a_i,b_j\})/\sqrt{P_C}$.
Now, since we know $\sum_i a_i$ is even, if $\sum_{i\neq A,B} a_i$ is even, then the resulting wavefunction can only have nonzero amplitudes on $(a_A,a_B)=(1,1)$ or $(a_A,a_B)=(0,0)$; if $\sum_{i\neq A,B} a_i$ is odd, then the resulting wavefunction can only have nonzero amplitudes on $(a_A,a_B)=(1,0)$ or $(a_A,a_B)=(0,1)$.
As mentioned earlier, the wavefunction $\la \pmb{+}| U |\{\sigma^x=a_i,\tau^x=b_j\}\ra$ is independent of $a_j$ if given $\{\tau_j^x=b_j\}$.
We therefore conclude that the resulting wavefunctions after measurements are either $\frac{1}{\sqrt{2}}(|11\ra+|00\ra)$ or $\frac{1}{\sqrt{2}}(|10\ra+|01\ra)$, which gives us the result $\MIE_Z=\ln 2$ if $A$ and $B$ are both on $\sigma$, independent of the separation between $A,B$. 

On the other hand, if $A$ and $B$ are both on $\tau$, we have the outcome probability $P_C=\sum_{b_A,b_B =0,1}|\psi(\{a_i,b_j\})|^2$ with the resulting wavefunction $|\psi_C\ra = \sum_{b_A,b_B =0,1}\chi(\{a_i,b_j\})|b_A,b_B\ra$.
The amplitudes will generally depending on the outcomes $\{b_j\}$. 
However, note that since we can trade $\tau_j^z$ with $\sigma_{j-\frac{1}{2}}^{x}\sigma_{j+\frac{1}{2}}^{x}$, the above resulting wavefunction will be the same as if we measure $|\psi_{\text{Ising}}\ra$ with $\sigma_{j-\frac{1}{2}}^{x}\sigma_{j+\frac{1}{2}}^{x}$ but leaving out $j=A$ and $j=B$. 
From the self-duality of $|\psi_{\text{Ising}}\ra$, this is the same as measuring $|\psi_{\text{Ising}}\ra$ on the $Z$-basis. 
We therefore conclude that $\MIE_Z(A:B)$ will be the same as $\MIE_Z(A:B)$ in the Ising model, which is consistent with our numerical results shown in Figs.~\ref{fig:MIE_gSPT}(a) and (b) in the main text.

%\clearpage
%\twocolumngrid

% The \nocite command causes all entries in a bibliography to be printed out
% whether or not they are actually referenced in the text. This is appropriate
% for the sample file to show the different styles of references, but authors
% most likely will not want to use it.
%\nocite{*}

\end{document}